\providecommand{\U}[1]{\protect\rule{.1in}{.1in}}
\newtheorem{theorem}{Theorem}
\newtheorem{corollary}[theorem]{Corollary}
\newtheorem{lemma}[theorem]{Lemma}
\newtheorem{proposition}[theorem]{Proposition}
\newenvironment{proof}[1][Proof]{\noindent\textbf{#1.} }{\ \rule{0.5em}{0.5em}}
\begin{document}

\title{\textbf{On distinguishability distillation and dilution exponents}}
\author{Mark\ M.~Wilde\thanks{Hearne Institute for Theoretical Physics, Department of
Physics and Astronomy, and Center for Computation and Technology, Louisiana
State University, Baton Rouge, Louisiana 70803, USA, Email: mwilde@lsu.edu}}
\date{\today}
\maketitle

\begin{abstract}
In this note, I define error exponents and strong converse exponents for the
tasks of distinguishability distillation and dilution. These are counterparts
to the one-shot distillable distinguishability and the one-shot
distinguishability cost, as previously defined in the resource theory of
asymmetric distinguishability. I show that they can be evaluated by
semi-definite programming, establish a number of their properties, bound them
using R\'{e}nyi relative entropies, and relate them to each other.

\end{abstract}
\tableofcontents

\section{Introduction to basic operational quantities}

The resource theory of asymmetric distinguishability was proposed and
developed in \cite{Matsumoto10,Matsumoto11,Wang2019states,Wang2019channels}.
The basic operational tasks are known as distinguishability distillation and
distinguishability dilution, in which the goal of distillation is to convert a
pair of states to a pure form of distinguishability known as bits of
asymmetric distinguishability (AD), and the goal of dilution is to accomplish
the reverse task. In both cases, we are interested in these processes
occurring as efficiently as possible.

The focus of the one-shot operational quantities proposed in
\cite{Wang2019states} is to fix the transformation error and then, for distillation, to
maximize the number of bits of AD that can be extracted from a pair of states,
and, for dilution, to minimize the number of bits of AD needed to generate a
pair of states. Here, I flip the objective of the task around, and instead
place a threshold on the number of bits of AD allowed and then minimize the
transformation error that can be realized with this constraint in place. The
resulting quantities are known as error exponents and strong converse
exponents, similar to what has been studied for a long time in hypothesis
testing and information theory \cite{1055254}.

\subsection{Distinguishability distillation}

The distillable distinguishability of the states $\rho$ and $\sigma$ is
defined as \cite{Wang2019states}
\begin{equation}
D_{d}^{\varepsilon}(\rho,\sigma)\coloneqq \log_{2}\sup_{\mathcal{P}%
\in\text{CPTP}}\{M:\mathcal{P}(\rho)\approx_{\varepsilon}|0\rangle
\!\langle0|,\mathcal{P}(\sigma)=\pi_{M}\},
\end{equation}
where CPTP stands for the set of completely positive, trace-preserving maps
(quantum channels), $M\geq1$,%
\begin{equation}
\pi_{M}\coloneqq \frac{1}{M}|0\rangle\!\langle0|+\left(  1-\frac{1}{M}\right)
|1\rangle\!\langle1|,
\end{equation}
and the shorthand $\approx_{\varepsilon}$ means the following:%
\begin{equation}
\tau\approx_{\varepsilon}\omega\qquad\Longleftrightarrow\qquad\frac{1}%
{2}\left\Vert \tau-\omega\right\Vert _{1}\leq\varepsilon.
\end{equation}
It is known that \cite{Wang2019states}
\begin{equation}
D_{d}^{\varepsilon}(\rho,\sigma)=D_{\min}^{\varepsilon}(\rho\Vert\sigma),
\end{equation}
where the smooth min-relative entropy is defined as \cite{BD10,BD11,WR12}
\begin{equation}
D_{\min}^{\varepsilon}(\rho\Vert\sigma)\coloneqq -\log_{2}\inf_{\Lambda\geq
0}\left\{  \operatorname{Tr}[\Lambda\sigma]:\operatorname{Tr}[\Lambda\rho
]\geq1-\varepsilon,\ \Lambda\leq I\right\}  . \label{eq:primal-SDP-smooth-min}%
\end{equation}
This quantity is also known as hypothesis testing relative entropy \cite{WR12}
and can be computed by semi-definite programming \cite{DKFRR12}. By strong
duality, we have the following alternate expression for $D_{\min}%
^{\varepsilon}(\rho\Vert\sigma)$ \cite{DKFRR12,Wang2019states}:%
\begin{equation}
D_{\min}^{\varepsilon}(\rho\Vert\sigma)=-\log_{2}\sup_{\mu,X\geq0}\left\{
\mu\left(  1-\varepsilon\right)  -\operatorname{Tr}[X]:\mu\rho\leq
\sigma+X\right\}  . \label{eq:dual-SDP-smooth-min}%
\end{equation}

We can turn the objectives of this task around and define the following
one-shot quantities for $m\geq0$:%
\begin{align}
E_{d}^{m}(\rho\Vert\sigma)  &  \coloneqq-\log_{2}\inf_{\mathcal{P}%
\in\text{CPTP}}\left\{  \varepsilon:\mathcal{P}(\rho)\approx_{\varepsilon
}|0\rangle\!\langle0|,\ \mathcal{P}(\sigma)=\pi_{2^{m^{\prime}}},\ m^{\prime
}\geq m\right\}  ,\label{eq:err-exp-div-def}\\
\widetilde{E}_{d}^{m}(\rho\Vert\sigma)  &  \coloneqq-\log_{2}\sup
_{\mathcal{P}\in\text{CPTP}}\left\{  1-\varepsilon:\mathcal{P}(\rho
)\approx_{\varepsilon}|0\rangle\!\langle0|,\ \mathcal{P}(\sigma)=\pi
_{2^{m^{\prime}}},\ m^{\prime}\geq m\right\}  . \label{eq:sc-exp-div-def}%
\end{align}
By definition, it follows that%
\begin{equation}
2^{-E_{d}^{m}(\rho\Vert\sigma)}=1-2^{-\widetilde{E}_{d}^{m}(\rho\Vert\sigma)}.
\label{eq:dist-dist-err-exp-sc-exp-eq}%
\end{equation}
Proposition~\ref{prop:sdp-err-exp-div} and Corollary~\ref{prop:sc-reduction}
below give simpler expressions for these quantities that can be evaluated by
semi-definite programming. These simplified expressions are then helpful for
establishing bounds on these quantities in terms of R\'{e}nyi relative
entropies (see Propositions~\ref{prop:dist-sc-exp-to-renyi} and
\ref{prop:dist-err-exp-to-renyi}).

The idea here is that we are trying to distill at least $m$ bits of asymmetric
distinguishability, in the sense of \cite{Wang2019states}, and we would like
to minimize the transformation error subject to this constraint. Let us call the first
quantity in \eqref{eq:err-exp-div-def} the distillation error exponent and the
second quantity in \eqref{eq:sc-exp-div-def} the distillation strong converse exponent.

For the i.i.d.~case, there is an \textquotedblleft asymptotic equipartition
property\textquotedblright\ or \textquotedblleft large deviation
property\textquotedblright\ as follows:%
\begin{align}
\lim_{n\rightarrow\infty}\frac{1}{n}E_{d}^{nR}(\rho^{\otimes n},\sigma
^{\otimes n})  &  =\sup_{\alpha\in\left(  0,1\right)  }\left(  \frac{\alpha
-1}{\alpha}\right)  \left(  R-D_{\alpha}(\rho\Vert\sigma)\right)
,\label{eq:err-exp-hyp-test}\\
\lim_{n\rightarrow\infty}\frac{1}{n}\widetilde{E}_{d}^{nR}(\rho^{\otimes
n},\sigma^{\otimes n})  &  =\sup_{\alpha\in\left(  1,\infty\right)  }\left(
\frac{\alpha-1}{\alpha}\right)  \left(  R-\widetilde{D}_{\alpha}(\rho
\Vert\sigma)\right)  , \label{eq:str-conv-exp-hyp-test}%
\end{align}
where the respective Petz- \cite{P85,P86} and sandwiched
\cite{muller2013quantum,WWY14} R\'enyi relative entropies are defined as
\begin{align}
D_{\alpha}(\rho\Vert\sigma)  &  \coloneqq \frac{1}{\alpha-1}\log
_{2}\operatorname{Tr}[\rho^{\alpha}\sigma^{1-\alpha}],\\
\widetilde{D}_{\alpha}(\rho\Vert\sigma)  &  \coloneqq \frac{1}{\alpha-1}%
\log_{2}\operatorname{Tr}[(\sigma^{\left(  1-\alpha\right)  /2\alpha}%
\rho\sigma^{\left(  1-\alpha\right)  /2\alpha})^{\alpha}].
\end{align}
The equalities in \eqref{eq:err-exp-hyp-test} and
\eqref{eq:str-conv-exp-hyp-test} follow from \cite{N06,Hay07,MO13}, as well as
the simple reductions in Proposition~\ref{prop:sdp-err-exp-div} and
Corollary~\ref{prop:sc-reduction} below. The first asymptotic quantity is only
meaningful when $R<D(\rho\Vert\sigma)$ and the second only when $R>D(\rho
\Vert\sigma)$, where the quantum relative entropy is defined as \cite{U62}
\begin{equation}
D(\rho\Vert\sigma)\coloneqq \operatorname{Tr}[\rho(\log_{2}\rho-\log_{2}%
\sigma)].
\end{equation}

\subsection{Distinguishability cost}

The one-shot distinguishability cost is defined as \cite{Wang2019states}
\begin{equation}
D_{c}^{\varepsilon}(\rho,\sigma)\coloneqq \log_{2}\inf_{\mathcal{P}%
\in\text{CPTP}}\{M:\mathcal{P}(|0\rangle\!\langle0|)\approx_{\varepsilon}%
\rho,\mathcal{P}(\pi_{M})=\sigma\},
\end{equation}
and it is known that \cite{Wang2019states}
\begin{equation}
D_{c}^{\varepsilon}(\rho,\sigma)=D_{\max}^{\varepsilon}(\rho\Vert\sigma),
\label{eq:smooth-max-op-int}%
\end{equation}
where the smooth max-relative entropy \cite{D09} is defined as%
\begin{align}
D_{\max}^{\varepsilon}(\rho\Vert\sigma)  &  \coloneqq \log_{2}\inf
_{\lambda,\widetilde{\rho}\geq0}\left\{  \lambda:\widetilde{\rho}\leq
\lambda\sigma,\ \frac{1}{2}\left\Vert \rho-\widetilde{\rho}\right\Vert
_{1}\leq\varepsilon,\ \operatorname{Tr}[\widetilde{\rho}]=1\right\} \\
&  =\log_{2}\inf_{\lambda,\widetilde{\rho},Z\geq0}\left\{  \lambda
:\widetilde{\rho}\leq\lambda\sigma,\ \rho-\widetilde{\rho}\leq
Z,\ \operatorname{Tr}[Z]\leq\varepsilon,\ \operatorname{Tr}[\widetilde{\rho
}]=1\right\}  . \label{eq:smooth-dmax-SDP}%
\end{align}
The equality in \eqref{eq:smooth-max-op-int} endows the smooth max-relative
entropy with its fundamental operational meaning as one-shot
distinguishability cost \cite{Wang2019states}. Eq.~\eqref{eq:smooth-dmax-SDP}
clarifies that the smooth max-relative entropy can be computed by
semi-definite programming \cite{Wang2019states}. By strong duality, the
following equality holds \cite{Wang2019states}
\begin{multline}
D_{\max}^{\varepsilon}(\rho\Vert\sigma)=\\
\log_{2}\sup_{t,X,Q\geq0,\mu\in\mathbb{R}}\left\{  \operatorname{Tr}%
[Q\rho]+\mu-\varepsilon t:\operatorname{Tr}[X\sigma]\leq1,\ Q\leq tI,\ Q+\mu
I\leq X\right\}  .
\end{multline}
It follows that the constraint $\operatorname{Tr}[X\sigma]\leq1$ can be
saturated with equality because the objective function only increases under
this change.

For the cost problem, we can turn the objective around, as we did for the
distillation problem, to arrive at the following quantities, i.e., dilution
error exponent and dilution strong converse exponent, respectively:%
\begin{align}
E_{c}^{m}(\rho\Vert\sigma)  &  \coloneqq-\log_{2}\inf_{\mathcal{P}%
\in\text{CPTP}}\left\{  \varepsilon:\mathcal{P}(|0\rangle\!\langle
0|)\approx_{\varepsilon}\rho,\mathcal{P}(\pi_{2^{m^{\prime}}})=\sigma
,\ m^{\prime}\leq m\right\}  ,\\
\widetilde{E}_{c}^{m}(\rho\Vert\sigma)  &  \coloneqq-\log_{2}\sup
_{\mathcal{P}\in\text{CPTP}}\left\{  1-\varepsilon:\mathcal{P}(|0\rangle
\!\langle0|)\approx_{\varepsilon}\rho,\mathcal{P}(\pi_{2^{m^{\prime}}}%
)=\sigma,\ m^{\prime}\leq m\right\}  .
\end{align}
The idea is that we are trying to use no more than $m$ bits of asymmetric
distinguishability to generate the pair $\left(  \rho,\sigma\right)  $ with as
small an error as possible. By definition, it follows that%
\begin{equation}
2^{-E_{c}^{m}(\rho\Vert\sigma)}=1-2^{-\widetilde{E}_{c}^{m}(\rho\Vert\sigma)}.
\label{eq:relate-err-exp-sc-exp-cost}%
\end{equation}
Proposition~\ref{prop:dist-dil-sdp} and Corollary~\ref{cor:sc-exp-dist-dil}
give simpler expressions for these quantities that can be evaluated by
semi-definite programming. Later on, I establish bounds on these exponents in
terms of R\'{e}nyi relative entropies (see
Propositions~\ref{prop:one-shot-lower-bnd-dist-dil} and
\ref{prop:err-exp-cost-to-renyi}).

It is then of interest to determine the following asymptotic operational
quantities:
\begin{align}
&  \lim_{n\rightarrow\infty}\frac{1}{n}E_{c}^{nR}(\rho^{\otimes n}%
,\sigma^{\otimes n}),\\
&  \lim_{n\rightarrow\infty}\frac{1}{n}\widetilde{E}_{c}^{nR}(\rho^{\otimes
n},\sigma^{\otimes n}).
\end{align}
See Section~\ref{sec:rec-devs} for a discussion of recent developments in this regard.

Here I also develop basic properties of and establish relationships between
the quantities $E_{d}^{m}(\rho\Vert\sigma)$, $\widetilde{E}_{d}^{m}(\rho
\Vert\sigma)$, $E_{c}^{m}(\rho\Vert\sigma)$, and $\widetilde{E}_{c}^{m}%
(\rho\Vert\sigma)$, to treat them as relative entropies in their own right.
For example, each of them obeys the data-processing inequality, which holds by
means of an operational argument. This is analogous to how the smooth min- and
max-relative entropies have been traditionally regarded as relative entropies,
even though their true origin is in the operational tasks of distinguishability distillation and dilution, respectively.

\section{Distinguishability distillation}

\subsection{Semi-definite programs}

A brief review of semi-definite programming (SDP) is available in
Appendix~\ref{sec:background-SDPs}.

The first claim consists of the following SDP expressions for $E_{d}^{m}%
(\rho\Vert\sigma)$. The first expression is the same as the operational
definition of the error exponent in hypothesis testing (see, e.g.,
\cite{N06,Hay07}).

\begin{proposition}
\label{prop:sdp-err-exp-div}For states $\rho$ and $\sigma$ and $m\geq0$, the
following equalities hold%
\begin{align}
E_{d}^{m}(\rho\Vert\sigma)  &  =-\log_{2}\left[  1-\sup_{\Lambda\geq0}\left\{
\operatorname{Tr}[\Lambda\rho]:\operatorname{Tr}[\Lambda\sigma]\leq\frac
{1}{2^{m}},\ \Lambda\leq I\right\}  \right]
\label{eq:primal-err-exp-distill-dist}\\
&  =-\log_{2}\left[  1-\inf_{\lambda,W\geq0}\left\{  \frac{\lambda}{2^{m}%
}+\operatorname{Tr}[W]:\rho\leq\lambda\sigma+W\right\}  \right]  .
\label{eq:dual-err-exp-distill-dist}%
\end{align}
The complementary slackness conditions for optimal $\Lambda$, $\lambda$, and
$W$ are as follows:%
\begin{align}
\left(  \lambda\sigma+W\right)  \Lambda &  =\rho\Lambda,\\
\frac{\lambda}{2^{m}}  &  =\operatorname{Tr}[\Lambda\sigma]\lambda,\\
W  &  =\Lambda W.
\end{align}

\end{proposition}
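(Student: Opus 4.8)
The plan is to establish the first equality (the operational definition equals the primal SDP in \eqref{eq:primal-err-exp-distill-dist}) by proving two matching inequalities, then obtain the second equality \eqref{eq:dual-err-exp-distill-dist} by Lagrangian/SDP duality, and finally read off the complementary slackness conditions from the vanishing Lagrangian terms. Throughout, abbreviate the primal optimal value by $p^{\star}\coloneqq\sup_{\Lambda\geq0}\{\operatorname{Tr}[\Lambda\rho]:\operatorname{Tr}[\Lambda\sigma]\leq2^{-m},\ \Lambda\leq I\}$. The goal of the first step is to show that $\inf_{\mathcal{P}}\varepsilon=1-p^{\star}$, since then $E_{d}^{m}(\rho\Vert\sigma)=-\log_{2}(1-p^{\star})$ directly from the definition in \eqref{eq:err-exp-div-def}.

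For achievability (the bound $\inf_{\mathcal{P}}\varepsilon\leq1-p^{\star}$), I would take any feasible $\Lambda$ and build the measurement channel
\begin{equation}
\mathcal{M}(\omega)\coloneqq\operatorname{Tr}[\Lambda\omega]\,|0\rangle\!\langle0|+\operatorname{Tr}[(I-\Lambda)\omega]\,|1\rangle\!\langle1|.
\end{equation}
Feasibility of $\Lambda$ gives $\mathcal{M}(\sigma)=\pi_{2^{m'}}$ with $2^{-m'}=\operatorname{Tr}[\Lambda\sigma]\leq2^{-m}$, hence $m'\geq m$, while a direct computation of the trace norm of the diagonal difference $\mathcal{M}(\rho)-|0\rangle\!\langle0|$ yields error exactly $1-\operatorname{Tr}[\Lambda\rho]$. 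Optimizing over feasible $\Lambda$ then gives $\inf_{\mathcal{P}}\varepsilon\leq\inf_{\Lambda}(1-\operatorname{Tr}[\Lambda\rho])=1-p^{\star}$.

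For the converse ($\inf_{\mathcal{P}}\varepsilon\geq1-p^{\star}$), I would take any feasible channel $\mathcal{P}$ and set $\Lambda\coloneqq\mathcal{P}^{\dagger}(|0\rangle\!\langle0|)$, which satisfies $0\leq\Lambda\leq I$ since $\mathcal{P}^{\dagger}$ is completely positive and unital. The adjoint relation gives $\operatorname{Tr}[\Lambda\sigma]=\langle0|\mathcal{P}(\sigma)|0\rangle=2^{-m'}\leq2^{-m}$, so $\Lambda$ is primal-feasible. The key step is lower bounding the error: using the variational (Helstrom) form of the trace distance with the test operator $I-|0\rangle\!\langle0|$ and the fact that $(I-|0\rangle\!\langle0|)|0\rangle\!\langle0|=0$, one obtains $\varepsilon\geq\operatorname{Tr}[(I-|0\rangle\!\langle0|)\mathcal{P}(\rho)]=1-\operatorname{Tr}[\Lambda\rho]\geq1-p^{\star}$. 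This is the main obstacle: selecting the correct test operator so that the bound is exactly tight against the achievability construction, and verifying that the argument goes through for an arbitrary (possibly higher-dimensional) output channel rather than only for a measurement channel, using that $|0\rangle\!\langle0|$ is a rank-one target.

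Finally, the second equality and the complementary slackness conditions follow from SDP duality. I would introduce a scalar multiplier $\lambda\geq0$ for the constraint $\operatorname{Tr}[\Lambda\sigma]\leq2^{-m}$ and a matrix multiplier $W\geq0$ for $\Lambda\leq I$, forming the Lagrangian $\operatorname{Tr}[\Lambda(\rho-\lambda\sigma-W)]+\lambda2^{-m}+\operatorname{Tr}[W]$; maximizing over $\Lambda\geq0$ forces $\rho\leq\lambda\sigma+W$ and leaves the dual objective $\lambda/2^{m}+\operatorname{Tr}[W]$ of \eqref{eq:dual-err-exp-distill-dist}. Strong duality holds by Slater's condition, witnessed by the strictly feasible choice $\Lambda=cI$ with $0<c<\min(1,2^{-m})$. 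Setting each of the three nonnegative Lagrangian terms to zero at optimality then yields $(\lambda\sigma+W)\Lambda=\rho\Lambda$, $\lambda/2^{m}=\operatorname{Tr}[\Lambda\sigma]\lambda$, and $W=\Lambda W$, which are precisely the stated complementary slackness conditions.
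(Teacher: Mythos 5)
Your proposal is correct and follows essentially the same route as the paper: reduce the operational optimization to measurement operators (your adjoint-channel/Helstrom argument is equivalent to the paper's dephasing-plus-data-processing step, since both amount to taking $\Lambda = \mathcal{P}^{\dagger}(|0\rangle\!\langle 0|)$ with error exactly $1-\operatorname{Tr}[\Lambda\rho]$), then pass to the dual SDP and invoke Slater's condition to read off complementary slackness. The only difference is cosmetic: you certify strong duality via a strictly feasible primal point ($\Lambda = cI$, which also requires exhibiting some feasible dual point, e.g.\ $\lambda=0$, $W=I$), whereas the paper exhibits a strictly feasible dual point ($\lambda=1$, $W=(\rho-\sigma)_{+}+I$); either works, and your Lagrangian derivation of the dual program and the slackness conditions matches the paper's standard-form SDP computation.
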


\begin{proof}
Recalling the definition%
\begin{equation}
E_{d}^{m}(\rho\Vert\sigma)\coloneqq -\log_{2}\inf_{\mathcal{P}\in\text{CPTP}%
}\left\{  \varepsilon:\mathcal{P}(\rho)\approx_{\varepsilon}|0\rangle
\!\langle0|,\mathcal{P}(\sigma)=\pi_{2^{m^{\prime}}},m^{\prime}\geq m\right\}
,
\end{equation}
let $\mathcal{P}$ be a channel such that%
\begin{align}
\mathcal{P}(\sigma)  &  =\pi_{2^{m^{\prime}}},\\
\frac{1}{2}\left\Vert \mathcal{P}(\rho)-|0\rangle\!\langle0|\right\Vert _{1}
&  \leq\varepsilon.
\end{align}
Then it is clear that by measuring in the computational basis, i.e.,
performing the completely dephasing channel $\Delta$, it follows that%
\begin{align}
\Delta(\mathcal{P}(\sigma))  &  =\Delta(\pi_{2^{m^{\prime}}})=\pi
_{2^{m^{\prime}}},\\
\varepsilon &  \geq\frac{1}{2}\left\Vert \mathcal{P}(\rho)-|0\rangle
\langle0|\right\Vert _{1}\\
&  \geq\frac{1}{2}\left\Vert \Delta(\mathcal{P}(\rho))-\Delta(|0\rangle
\langle0|)\right\Vert _{1}\\
&  =\frac{1}{2}\left\Vert \Delta(\mathcal{P}(\rho))-|0\rangle\!\langle
0|\right\Vert _{1},
\end{align}
so that the error only decreases after doing so for all $m^{\prime}\geq m$.
Thus, it suffices to perform the optimization over quantum-to-classical
channels of the following form:%
\begin{equation}
\mathcal{M}(\omega)=\operatorname{Tr}[\Lambda\omega]|0\rangle\!\langle
0|+\operatorname{Tr}[\left(  I-\Lambda\right)  \omega]|1\rangle\!\langle1|.
\end{equation}
Then the condition $\mathcal{P}(\sigma)=\pi_{2^{m^{\prime}}}$ is equivalent to%
\begin{equation}
\operatorname{Tr}[\Lambda\sigma]=\frac{1}{2^{m^{\prime}}},
\end{equation}
and%
\begin{align}
\frac{1}{2}\left\Vert \mathcal{M}(\rho)-|0\rangle\!\langle0|\right\Vert _{1}
&  =\frac{1}{2}\left\Vert \operatorname{Tr}[\Lambda\rho]|0\rangle
\!\langle0|+\operatorname{Tr}[\left(  I-\Lambda\right)  \rho]|1\rangle
\!\langle1|-|0\rangle\!\langle0|\right\Vert _{1}\\
&  =\frac{1}{2}\left\Vert -\left(  1-\operatorname{Tr}[\Lambda\rho]\right)
|0\rangle\!\langle0|+\operatorname{Tr}[\left(  I-\Lambda\right)
\rho]|1\rangle\!\langle1|\right\Vert _{1}\\
&  =1-\operatorname{Tr}[\Lambda\rho].
\end{align}
Then the optimization above can be rewritten as%
\begin{align}
&  E_{d}^{m}(\rho\Vert\sigma)\nonumber\\
&  =-\log_{2}\inf_{\varepsilon,\Lambda\geq0}\left\{  \varepsilon
:1-\operatorname{Tr}[\Lambda\rho]\leq\varepsilon,\operatorname{Tr}%
[\Lambda\sigma]=\frac{1}{2^{m^{\prime}}},\ \Lambda\leq I,\varepsilon
\leq1,m^{\prime}\geq m\right\} \\
&  =-\log_{2}\inf_{\varepsilon,\Lambda\geq0}\left\{  \varepsilon
:1-\operatorname{Tr}[\Lambda\rho]\leq\varepsilon,\operatorname{Tr}%
[\Lambda\sigma]\leq\frac{1}{2^{m}},\ \Lambda\leq I,\varepsilon\leq1\right\}  .
\end{align}
Now consider finally that since we are trying to minimize $\varepsilon$, we
can simply set it equal to $1-\operatorname{Tr}[\Lambda\rho]$ and we finally
arrive at the following:%
\begin{align}
E_{d}^{m}(\rho\Vert\sigma)  &  =-\log_{2}\inf_{\Lambda\geq0}\left\{
1-\operatorname{Tr}[\Lambda\rho]:\operatorname{Tr}[\Lambda\sigma]\leq\frac
{1}{2^{m}},\ \Lambda\leq I\right\} \\
&  =-\log_{2}\left[  1-\sup_{\Lambda\geq0}\left\{  \operatorname{Tr}%
[\Lambda\rho]:\operatorname{Tr}[\Lambda\sigma]\leq\frac{1}{2^{m}}%
,\ \Lambda\leq I\right\}  \right]  . \label{eq:SDP-expr-err-exp}%
\end{align}
The expression inside the logarithm is thus a semi-definite program.

Using the standard form of SDPs, as stated in the appendix
\begin{align}
&  \sup_{X\geq0}\left\{  \operatorname{Tr}[AX]:\Phi(X)\leq B\right\}  ,\\
&  \inf_{Y\geq0}\left\{  \operatorname{Tr}[BY]:\Phi^{\dag}(Y)\geq A\right\}  ,
\end{align}
we can calculate the dual of \eqref{eq:SDP-expr-err-exp}. Then let us identify%
\begin{equation}
A=\rho,\qquad X=\Lambda,\qquad\Phi(X)=%
\begin{bmatrix}
\operatorname{Tr}[\Lambda\sigma] & 0\\
0 & \Lambda
\end{bmatrix}
,\qquad B=%
\begin{bmatrix}
\frac{1}{2^{m}} & 0\\
0 & I
\end{bmatrix}
. \label{eq:SDP-choices-err-exp-dist-1}%
\end{equation}
Setting%
\begin{equation}
Y=%
\begin{bmatrix}
\lambda & 0\\
0 & W
\end{bmatrix}
, \label{eq:SDP-choices-err-exp-dist-2}%
\end{equation}
we find that%
\begin{align}
\operatorname{Tr}[Y\Phi(X)]  &  =\lambda\operatorname{Tr}[\Lambda
\sigma]+\operatorname{Tr}[\Lambda W]\\
&  =\operatorname{Tr}[\left(  \lambda\sigma+W\right)  \Lambda]\\
&  =\operatorname{Tr}[\Phi^{\dag}(Y)X].
\end{align}
Thus,%
\begin{equation}
\Phi^{\dag}(Y)=\lambda\sigma+W. \label{eq:SDP-choices-err-exp-dist-3}%
\end{equation}
Then we find that the dual is given by%
\begin{equation}
\inf_{Y\geq0}\left\{  \operatorname{Tr}[BY]:\Phi^{\dag}(Y)\geq A\right\}
=\inf_{\lambda,W\geq0}\left\{  \frac{\lambda}{2^{m}}+\operatorname{Tr}%
[W]:\lambda\sigma+W\geq\rho\right\}
\end{equation}
So all of this together implies that%
\begin{align}
E_{d}^{m}(\rho\Vert\sigma)  &  =-\log_{2}\left[  1-\sup_{\Lambda\geq0}\left\{
\operatorname{Tr}[\Lambda\rho]:\operatorname{Tr}[\Lambda\sigma]\leq\frac
{1}{2^{m}},\ \Lambda\leq I\right\}  \right] \\
&  =-\log_{2}\left[  1-\inf_{\lambda,W\geq0}\left\{  \frac{\lambda}{2^{m}%
}+\operatorname{Tr}[W]:\rho\leq\lambda\sigma+W\right\}  \right]  .
\end{align}

Regarding strong duality, consider that a feasible choice for the primal is
$\Lambda=I/2^{m}$, while a strictly feasible choice for the dual is
$\lambda=1$ and $W=\left(  \rho-\sigma\right)  _{+}+I$. Thus, strong duality holds.

The complementary slackness conditions follow by examining
\eqref{eq:SDP-choices-err-exp-dist-1}--\eqref{eq:SDP-choices-err-exp-dist-3}
and \eqref{eq:comp-slack-1}--\eqref{eq:comp-slack-2}.
\end{proof}

\begin{corollary}
\label{prop:sc-reduction} The following equalities hold%
\begin{align}
\widetilde{E}_{d}^{m}(\rho\Vert\sigma)  &  =-\log_{2}\sup_{\Lambda\geq
0}\left\{  \operatorname{Tr}[\Lambda\rho]:\operatorname{Tr}[\Lambda\sigma
]\leq\frac{1}{2^{m}},\ \Lambda\leq I\right\} \\
&  =-\log_{2}\inf_{\lambda,W\geq0}\left\{  \frac{\lambda}{2^{m}}%
+\operatorname{Tr}[W]:\rho\leq\lambda\sigma+W\right\}  .
\end{align}

\end{corollary}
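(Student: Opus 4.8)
The plan is to obtain both equalities as direct consequences of Proposition~\ref{prop:sdp-err-exp-div} together with the defining relation \eqref{eq:dist-dist-err-exp-sc-exp-eq}, rather than re-deriving the channel reduction from scratch. To set up the argument, I would abbreviate the common SDP value as
\begin{equation}
S \coloneqq \sup_{\Lambda\geq0}\left\{\operatorname{Tr}[\Lambda\rho]:\operatorname{Tr}[\Lambda\sigma]\leq\frac{1}{2^{m}},\ \Lambda\leq I\right\},
\end{equation}
so that Proposition~\ref{prop:sdp-err-exp-div}, i.e.\ \eqref{eq:primal-err-exp-distill-dist}, reads $2^{-E_{d}^{m}(\rho\Vert\sigma)} = 1 - S$. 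Since $\Lambda\leq I$ forces $\operatorname{Tr}[\Lambda\rho]\leq\operatorname{Tr}[\rho]=1$, we have $0\leq S\leq1$, so the logarithms below are well defined.

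For the first equality, I would substitute $2^{-E_{d}^{m}(\rho\Vert\sigma)} = 1-S$ into \eqref{eq:dist-dist-err-exp-sc-exp-eq}, which gives $1 - 2^{-\widetilde{E}_{d}^{m}(\rho\Vert\sigma)} = 1 - S$, i.e.\ $2^{-\widetilde{E}_{d}^{m}(\rho\Vert\sigma)} = S$. Taking $-\log_{2}$ of both sides then yields
\begin{equation}
\widetilde{E}_{d}^{m}(\rho\Vert\sigma) = -\log_{2} S = -\log_{2}\sup_{\Lambda\geq0}\left\{\operatorname{Tr}[\Lambda\rho]:\operatorname{Tr}[\Lambda\sigma]\leq\frac{1}{2^{m}},\ \Lambda\leq I\right\},
\end{equation}
which is the claimed primal form. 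For the second equality, I would invoke the strong duality already established within the proof of Proposition~\ref{prop:sdp-err-exp-div}: there it is shown that the primal SDP $S$ equals its dual
\begin{equation}
\inf_{\lambda,W\geq0}\left\{\frac{\lambda}{2^{m}}+\operatorname{Tr}[W]:\rho\leq\lambda\sigma+W\right\},
\end{equation}
with strictly feasible points exhibited on both sides so that Slater's condition holds. Replacing $S$ by this dual value in $\widetilde{E}_{d}^{m}(\rho\Vert\sigma) = -\log_{2} S$ then gives the second line and completes the argument.

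Since the substantive work, namely the reduction to measurement channels and the SDP duality computation, is already carried out in Proposition~\ref{prop:sdp-err-exp-div}, I do not expect a genuine obstacle here; the only point requiring care is keeping track of the complement structure in \eqref{eq:dist-dist-err-exp-sc-exp-eq}, so that the additive $1-(\cdot)$ appearing in the error-exponent expression cancels exactly and leaves $S$ itself inside the logarithm. As an alternative route that avoids relying on \eqref{eq:dist-dist-err-exp-sc-exp-eq}, one could rerun the reduction of Proposition~\ref{prop:sdp-err-exp-div} verbatim, noting that the feasible set over channels $\mathcal{P}$ is identical and only the objective changes from minimizing $\varepsilon$ to maximizing $1-\varepsilon = \operatorname{Tr}[\Lambda\rho]$; this directly produces $\sup_{\Lambda}\operatorname{Tr}[\Lambda\rho]$ without the intervening complement, after which the same duality yields the dual expression.
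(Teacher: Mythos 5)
Your proposal is correct and matches the paper's own argument: the paper likewise proves this corollary by combining Proposition~\ref{prop:sdp-err-exp-div} with the relation \eqref{eq:dist-dist-err-exp-sc-exp-eq} (and the strong duality already established there), merely stating that it "follows directly from the previous result and definitions." Your write-up simply spells out the same substitution and cancellation explicitly, which is fine.
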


\begin{proof}
This follows directly from the previous result and definitions.
\end{proof}

\begin{proposition}
Let $\rho$ and $\sigma$ be states, and let $\mathcal{N}$ be a positive,
trace-preserving map. Then the following data-processing inequalities hold%
\begin{align}
E_{d}^{m}(\rho\Vert\sigma)  &  \geq E_{d}^{m}(\mathcal{N}(\rho)\Vert
\mathcal{N}(\sigma)),\label{eq:dp-err-exp}\\
\widetilde{E}_{d}^{m}(\rho\Vert\sigma)  &  \leq\widetilde{E}_{d}%
^{m}(\mathcal{N}(\rho)\Vert\mathcal{N}(\sigma)). \label{eq:dp-sc-exp}%
\end{align}

\end{proposition}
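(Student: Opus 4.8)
The plan is to reduce both inequalities to a single monotonicity property of the semi-definite program underlying Proposition~\ref{prop:sdp-err-exp-div} and Corollary~\ref{prop:sc-reduction}. Introduce the auxiliary quantity
\begin{equation}
F(\rho\Vert\sigma)\coloneqq\sup_{\Lambda\geq0}\left\{\operatorname{Tr}[\Lambda\rho]:\operatorname{Tr}[\Lambda\sigma]\leq\tfrac{1}{2^{m}},\ \Lambda\leq I\right\}.
\end{equation}
By \eqref{eq:primal-err-exp-distill-dist} we have $E_{d}^{m}(\rho\Vert\sigma)=-\log_{2}[1-F(\rho\Vert\sigma)]$, which is monotonically increasing in $F$, while Corollary~\ref{prop:sc-reduction} gives $\widetilde{E}_{d}^{m}(\rho\Vert\sigma)=-\log_{2}F(\rho\Vert\sigma)$, which is monotonically decreasing in $F$. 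The two desired inequalities \eqref{eq:dp-err-exp} and \eqref{eq:dp-sc-exp} point in opposite directions, and these opposite directions are matched exactly by the opposite monotonicities, so both statements follow at once from the single claim
\begin{equation}
F(\rho\Vert\sigma)\geq F(\mathcal{N}(\rho)\Vert\mathcal{N}(\sigma)).
\end{equation}

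To establish this, I would pull an optimizer back through the adjoint map. The feasible set $\{\Lambda:0\leq\Lambda\leq I,\ \operatorname{Tr}[\Lambda\sigma]\leq 2^{-m}\}$ is compact and nonempty and the objective is continuous, so let $\Lambda'$ attain the supremum defining $F(\mathcal{N}(\rho)\Vert\mathcal{N}(\sigma))$, and set $\Lambda\coloneqq\mathcal{N}^{\dagger}(\Lambda')$, where $\mathcal{N}^{\dagger}$ is the Hilbert--Schmidt adjoint of $\mathcal{N}$. The adjoint relation immediately gives $\operatorname{Tr}[\Lambda\rho]=\operatorname{Tr}[\Lambda'\mathcal{N}(\rho)]$ and $\operatorname{Tr}[\Lambda\sigma]=\operatorname{Tr}[\Lambda'\mathcal{N}(\sigma)]\leq\tfrac{1}{2^{m}}$, so the $\sigma$-constraint transfers and the objective value equals $F(\mathcal{N}(\rho)\Vert\mathcal{N}(\sigma))$. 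It then remains only to check that $0\leq\Lambda\leq I$, and this is where the hypotheses on $\mathcal{N}$ enter.

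I would verify feasibility through two observations about $\mathcal{N}^{\dagger}$. First, positivity of $\mathcal{N}$ forces positivity of $\mathcal{N}^{\dagger}$: for any $Z\geq0$ one has $\operatorname{Tr}[Z\,\mathcal{N}^{\dagger}(\Lambda')]=\operatorname{Tr}[\mathcal{N}(Z)\,\Lambda']\geq0$, since $\mathcal{N}(Z)\geq0$ and $\Lambda'\geq0$, and by the standard characterization of positive semi-definiteness via the trace inner product this yields $\Lambda=\mathcal{N}^{\dagger}(\Lambda')\geq0$. Second, trace preservation of $\mathcal{N}$ is equivalent to unitality of $\mathcal{N}^{\dagger}$, namely $\mathcal{N}^{\dagger}(I)=I$, since $\operatorname{Tr}[\mathcal{N}^{\dagger}(I)X]=\operatorname{Tr}[\mathcal{N}(X)]=\operatorname{Tr}[X]$ for all $X$. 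Combining these, from $I-\Lambda'\geq0$ and positivity of $\mathcal{N}^{\dagger}$ we obtain $\mathcal{N}^{\dagger}(I-\Lambda')\geq0$, i.e.\ $\Lambda\leq\mathcal{N}^{\dagger}(I)=I$. Hence $\Lambda$ is feasible and $F(\rho\Vert\sigma)\geq\operatorname{Tr}[\Lambda\rho]=F(\mathcal{N}(\rho)\Vert\mathcal{N}(\sigma))$, which closes the argument.

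The main obstacle I anticipate is precisely that $\mathcal{N}$ is assumed only positive and trace-preserving, not completely positive. This rules out the naive operational proof, namely composing a channel $\mathcal{P}$ feasible for $(\mathcal{N}(\rho),\mathcal{N}(\sigma))$ with $\mathcal{N}$ to produce a feasible map for $(\rho,\sigma)$, because $\mathcal{P}\circ\mathcal{N}$ need not be a legitimate (completely positive) channel and so need not belong to the CPTP optimization. The adjoint/SDP route above sidesteps this entirely, since it invokes only positivity and unitality of $\mathcal{N}^{\dagger}$, and both properties survive for an arbitrary positive trace-preserving $\mathcal{N}$; the only points worth stating carefully are the implications ``$\mathcal{N}$ positive $\Rightarrow$ $\mathcal{N}^{\dagger}$ positive'' and ``$\mathcal{N}$ trace-preserving $\Leftrightarrow$ $\mathcal{N}^{\dagger}$ unital'', after which the feasibility of $\Lambda$ is immediate.
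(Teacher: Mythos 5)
Your proof is correct and takes essentially the same route as the paper: the paper likewise works with the primal SDP form \eqref{eq:primal-err-exp-distill-dist}, pulls a feasible measurement operator back through the Hilbert--Schmidt adjoint $\mathcal{N}^{\dagger}$ (whose positivity and unitality follow from $\mathcal{N}$ being positive and trace-preserving), and then deduces \eqref{eq:dp-sc-exp} from \eqref{eq:dp-err-exp} via the relation \eqref{eq:dist-dist-err-exp-sc-exp-eq}, which is equivalent to your monotonicity argument in $F$. Your write-up merely makes explicit the feasibility details that the paper leaves to the reader.
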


\begin{proof}
We use the primal form of $E_{d}^{m}(\rho\Vert\sigma)$ in
\eqref{eq:primal-err-exp-distill-dist}. Let $\Lambda$ be an arbitrary feasible
measurement operator for $E_{d}^{m}(\mathcal{N}(\rho)\Vert\mathcal{N}%
(\sigma))$. Then $\mathcal{N}^{\dag}(\Lambda)$ is a feasible measurement
operator for $E_{d}^{m}$. By applying definitions, the inequality in
\eqref{eq:dp-err-exp} follows. Then applying
\eqref{eq:dist-dist-err-exp-sc-exp-eq}, the inequality in \eqref{eq:dp-sc-exp} follows.
\end{proof}

\subsection{Relating distinguishability distillation to R\'enyi relative
entropies}

Let us now relate these quantities to R\'enyi relative entropies.

\begin{proposition}
\label{prop:dist-sc-exp-to-renyi}The following inequality holds%
\begin{equation}
\widetilde{E}_{d}^{m}(\rho\Vert\sigma)\geq\sup_{\alpha>1}\left(  \frac
{\alpha-1}{\alpha}\right)  \left(  m-\widetilde{D}_{\alpha}(\rho\Vert
\sigma)\right)  .
\end{equation}

\end{proposition}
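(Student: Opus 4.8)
The plan is to start from the reduced expression for the distillation strong converse exponent provided by Corollary~\ref{prop:sc-reduction}, namely
\[
2^{-\widetilde{E}_{d}^{m}(\rho\Vert\sigma)}=\sup_{\Lambda\geq0}\left\{\operatorname{Tr}[\Lambda\rho]:\operatorname{Tr}[\Lambda\sigma]\leq\tfrac{1}{2^{m}},\ \Lambda\leq I\right\},
\]
and to upper bound the success probability $\operatorname{Tr}[\Lambda\rho]$ of every feasible test $\Lambda$ by $2^{-\frac{\alpha-1}{\alpha}(m-\widetilde{D}_{\alpha}(\rho\Vert\sigma))}$. Taking $-\log_{2}$, then the supremum over feasible $\Lambda$, and finally the supremum over $\alpha>1$ will yield the stated inequality. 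This is precisely the strong-converse direction of quantum hypothesis testing, so I expect the argument to follow the standard template.

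Fix $\alpha>1$ and a feasible $\Lambda$, and set $p\coloneqq\operatorname{Tr}[\Lambda\rho]$ and $q\coloneqq\operatorname{Tr}[\Lambda\sigma]$. First I would pass to the binary measurement channel $\mathcal{M}(\omega)=\operatorname{Tr}[\Lambda\omega]\,|0\rangle\!\langle0|+\operatorname{Tr}[(I-\Lambda)\omega]\,|1\rangle\!\langle1|$, under which $\mathcal{M}(\rho)$ and $\mathcal{M}(\sigma)$ become the classical distributions $(p,1-p)$ and $(q,1-q)$. Invoking the data-processing inequality for the sandwiched R\'{e}nyi relative entropy, which is valid for all $\alpha>1$ \cite{muller2013quantum,WWY14}, gives
\[
\widetilde{D}_{\alpha}(\rho\Vert\sigma)\geq\widetilde{D}_{\alpha}(\mathcal{M}(\rho)\Vert\mathcal{M}(\sigma))=\frac{1}{\alpha-1}\log_{2}\!\left(p^{\alpha}q^{1-\alpha}+(1-p)^{\alpha}(1-q)^{1-\alpha}\right).
\]
I would then discard the nonnegative second summand (legitimate since $\frac{1}{\alpha-1}>0$) to obtain $\widetilde{D}_{\alpha}(\rho\Vert\sigma)\geq\frac{1}{\alpha-1}\log_{2}(p^{\alpha}q^{1-\alpha})$. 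Rearranging this into $\log_{2}p\leq\frac{\alpha-1}{\alpha}\widetilde{D}_{\alpha}(\rho\Vert\sigma)+\frac{\alpha-1}{\alpha}\log_{2}q$, substituting the feasibility constraint $\log_{2}q\leq-m$, and using $\frac{\alpha-1}{\alpha}>0$, I arrive at $\operatorname{Tr}[\Lambda\rho]=p\leq2^{-\frac{\alpha-1}{\alpha}(m-\widetilde{D}_{\alpha}(\rho\Vert\sigma))}$, as desired.

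The computation is essentially routine once the reduction is in place; the only genuine input is the data-processing inequality for $\widetilde{D}_{\alpha}$ at $\alpha>1$, which I will cite rather than reprove. The minor points I would handle with care are the degenerate cases: if $p=0$ the bound is trivial, and if $q=0$ with $p>0$ then the classical divergence above is $+\infty$, forcing $\widetilde{D}_{\alpha}(\rho\Vert\sigma)=+\infty$, in which case the asserted lower bound is vacuously satisfied (its right-hand side being $-\infty$); thus the substantive estimate is needed only when $\widetilde{D}_{\alpha}(\rho\Vert\sigma)<\infty$. An alternative to invoking data processing would be a direct H\"{o}lder-type inequality relating $\operatorname{Tr}[\Lambda\rho]$ and $\operatorname{Tr}[\Lambda\sigma]$ through $\widetilde{D}_{\alpha}$, but the measurement-channel route is cleaner and self-contained given the tools already available.
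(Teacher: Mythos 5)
Your proposal is correct and follows essentially the same route as the paper's proof: both apply the data-processing inequality for the sandwiched R\'{e}nyi relative entropy under the binary measurement induced by a feasible $\Lambda$, discard the second summand of the resulting classical divergence, insert the constraint $\operatorname{Tr}[\Lambda\sigma]\leq 2^{-m}$, rearrange, and optimize over $\Lambda$ and then over $\alpha>1$. Your explicit treatment of the degenerate cases $p=0$ and $q=0$ is a small tidiness bonus the paper omits, but it does not change the argument.
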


\begin{proof}
This is very similar to the proof of \cite[Lemma~5]{Cooney2016}. Let $\Lambda$
be a measurement operator and suppose that $\operatorname{Tr}[\Lambda
\sigma]\leq1/2^{m}$. Then we find from data processing of the sandwiched
R\'{e}nyi relative entropy for $\alpha>1$
\cite{FL13,MO13,beigi2013sandwiched,W17f} that%
\begin{align}
&  \widetilde{D}_{\alpha}(\rho\Vert\sigma)\nonumber\\
&  \geq\frac{1}{\alpha-1}\log_{2}\left(  \operatorname{Tr}[\Lambda
\rho]^{\alpha}\operatorname{Tr}[\Lambda\sigma]^{1-\alpha}+\operatorname{Tr}%
[\left(  I-\Lambda\right)  \rho]^{\alpha}\operatorname{Tr}[\left(
I-\Lambda\right)  \sigma]^{1-\alpha}\right) \\
&  \geq\frac{1}{\alpha-1}\log_{2}\left(  \operatorname{Tr}[\Lambda
\rho]^{\alpha}\left(  1/2^{m}\right)  ^{1-\alpha}\right) \\
&  =\frac{\alpha}{\alpha-1}\log_{2}\operatorname{Tr}[\Lambda\rho]+m.
\end{align}
Rewriting this inequality, we find that%
\begin{equation}
-\log_{2}\operatorname{Tr}[\Lambda\rho]\geq\left(  \frac{\alpha-1}{\alpha
}\right)  \left(  m-\widetilde{D}_{\alpha}(\rho\Vert\sigma)\right)  .
\end{equation}
The right-hand side is independent of $\Lambda$. Since it holds for all
$\Lambda$ satisfying the given constraints, we conclude that%
\begin{equation}
\widetilde{E}_{d}^{m}(\rho\Vert\sigma)\geq\left(  \frac{\alpha-1}{\alpha
}\right)  \left(  m-\widetilde{D}_{\alpha}(\rho\Vert\sigma)\right)  .
\end{equation}
Since the right-hand side holds for all $\alpha>1$, we conclude that%
\begin{equation}
\widetilde{E}_{d}^{m}(\rho\Vert\sigma)\geq\sup_{\alpha>1}\left(  \frac
{\alpha-1}{\alpha}\right)  \left(  m-\widetilde{D}_{\alpha}(\rho\Vert
\sigma)\right)  .
\end{equation}
This concludes the proof.
\end{proof}

\bigskip Recall the following:

\begin{lemma}
[\cite{ACMBMAV07}]\label{lemma:spectral-ineq} Let $A$ and $B$ be positive
semi-definite operators, and let $s\in\left[  0,1\right]  $. Then the
following inequality holds%
\begin{equation}
\frac{1}{2}\left(  \operatorname{Tr}[A+B]-\left\Vert A-B\right\Vert
_{1}\right)  \leq\operatorname{Tr}[A^{s}B^{1-s}].
\end{equation}

\end{lemma}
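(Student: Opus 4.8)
The plan is to reduce the stated trace inequality to a cleaner ``minimum'' form by means of the Jordan--Hahn decomposition of the Hermitian operator $A-B$, and then to attack the reduced inequality, where the genuine (noncommutative) difficulty lives. Let $P$ denote the spectral projection of $A-B$ onto its nonnegative eigenspace, and set $Q\coloneqq I-P$, the projection onto the strictly negative eigenspace. Since $P$ and $Q$ commute with $A-B$, I would first record that $\left\Vert A-B\right\Vert_1=\operatorname{Tr}[(P-Q)(A-B)]$ and $\operatorname{Tr}[A+B]=\operatorname{Tr}[(P+Q)(A+B)]$. Substituting these and cancelling the blocks $\operatorname{Tr}[PA]$ and $\operatorname{Tr}[QB]$ yields the identity
\begin{equation}
\frac{1}{2}\left(\operatorname{Tr}[A+B]-\left\Vert A-B\right\Vert_1\right)=\operatorname{Tr}[QA]+\operatorname{Tr}[PB].
\end{equation}
Thus the lemma is equivalent to the single trace inequality $\operatorname{Tr}[A^sB^{1-s}]\geq\operatorname{Tr}[QA]+\operatorname{Tr}[PB]$, and, since $\operatorname{Tr}[QA]+\operatorname{Tr}[PB]=\operatorname{Tr}[A]-\operatorname{Tr}[(A-B)_+]$, to $\operatorname{Tr}[A^sB^{1-s}]\geq\operatorname{Tr}[A]-\operatorname{Tr}[(A-B)_+]$, where $(A-B)_+$ is the positive part of $A-B$.

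The next step is a sanity check that fixes the intuition and isolates the obstacle. When $A$ and $B$ commute, they are simultaneously diagonalizable with eigenvalues $a_i,b_i\geq0$, the right-hand side becomes $\sum_i\min(a_i,b_i)$, and the inequality reduces termwise to the scalar bound $\min(a_i,b_i)\leq a_i^sb_i^{1-s}$, i.e.\ the elementary fact that a weighted geometric mean dominates the minimum. The operator-monotonicity heuristic behind this is that on the range of $P$ one has $P(A-B)P\geq0$, so morally $A^s\geq B^s$ there and $A^sB^{1-s}$ contributes at least $\operatorname{Tr}[PB]$, with the symmetric statement on the range of $Q$. I would therefore try to make this rigorous through the variational characterization $\operatorname{Tr}[(A-B)_+]=\max_{0\leq T\leq I}\operatorname{Tr}[T(A-B)]$: to prove $\operatorname{Tr}[A]-\operatorname{Tr}[A^sB^{1-s}]\leq\operatorname{Tr}[(A-B)_+]$ it suffices to exhibit a single operator $0\leq T\leq I$ with $\operatorname{Tr}[A^sB^{1-s}]\geq\operatorname{Tr}[(I-T)A]+\operatorname{Tr}[TB]$, and one is free to choose $T$ more conveniently than the bare spectral projection $P$.

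The hard part will be exactly this noncommutative core. The naive attempt---replacing the right-hand side by $\operatorname{Tr}[M]$ for the ``minimum'' operator $M\coloneqq A-(A-B)_+$ and using $M\leq A$, $M\leq B$ together with operator monotonicity of $x\mapsto x^s$---breaks down because $M$ need not be positive semi-definite when $A$ and $B$ fail to commute (a small $2\times2$ example already produces an $M$ with a negative eigenvalue), so the monotonicity chain cannot even be started. Consequently the operator-monotonicity and operator-concavity input must be applied globally rather than blockwise; I would carry this out via the integral representation $x^s=\frac{\sin(\pi s)}{\pi}\int_0^\infty\frac{x}{x+t}\,t^{s-1}\,dt$, which expresses $A^s$ through the operator-monotone resolvents $(A+t)^{-1}$ and allows one to compare $\operatorname{Tr}[A^sB^{1-s}]$ with $\operatorname{Tr}[(I-T)A]+\operatorname{Tr}[TB]$ at each value of $t$ for a suitably adapted $T$. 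This delicate estimate is precisely the content established in \cite{ACMBMAV07}, and it is the only place in the argument where more than elementary manipulation of traces and projections is required.
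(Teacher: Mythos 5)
Your Jordan--Hahn reduction is correct: with $P$ the projector onto the nonnegative part of $A-B$ and $Q=I-P$, one indeed has $\frac{1}{2}\left(\operatorname{Tr}[A+B]-\left\Vert A-B\right\Vert_{1}\right)=\operatorname{Tr}[QA]+\operatorname{Tr}[PB]=\operatorname{Tr}[A]-\operatorname{Tr}[(A-B)_{+}]$, and your observation that the ``min'' operator $A-(A-B)_{+}$ can fail to be positive semi-definite is also right. The problem is that your argument stops exactly where the lemma begins: the inequality $\operatorname{Tr}[A]-\operatorname{Tr}[(A-B)_{+}]\leq\operatorname{Tr}[A^{s}B^{1-s}]$ \emph{is} the lemma (your reduction is an identity, not a weakening), and you hand it back to \cite{ACMBMAV07} rather than proving it. To be fair, the paper does the same thing---the lemma is stated there by citation, with no proof---so there is no internal argument to compare against; but as a standalone proof attempt, deferring the noncommutative core to the reference is a genuine gap, and the suggested route via the integral representation of $x^{s}$ is a plan, not an argument.

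The gap can be closed with three applications of L\"{o}wner's theorem (operator monotonicity of $x\mapsto x^{r}$ for $r\in[0,1]$), applied not to the ``min'' operator but to the ``max'' operator $C\coloneqq B+(A-B)_{+}$; this short argument is well known (often attributed to N.~Ozawa) and is much simpler than the original proof in \cite{ACMBMAV07}. Note that $C\geq B$ and $C-A=(A-B)_{-}\geq0$, so $C$ dominates both $A$ and $B$ and is automatically positive semi-definite. Then
\begin{align}
\operatorname{Tr}[A]-\operatorname{Tr}[A^{s}B^{1-s}]  &  =\operatorname{Tr}[A^{s}(A^{1-s}-B^{1-s})]\nonumber\\
&  \leq\operatorname{Tr}[A^{s}(C^{1-s}-B^{1-s})]\nonumber\\
&  \leq\operatorname{Tr}[C^{s}(C^{1-s}-B^{1-s})]\nonumber\\
&  =\operatorname{Tr}[C]-\operatorname{Tr}[C^{s}B^{1-s}]\nonumber\\
&  \leq\operatorname{Tr}[C]-\operatorname{Tr}[B]=\operatorname{Tr}[(A-B)_{+}],
\end{align}
where the first inequality uses $A^{1-s}\leq C^{1-s}$ traced against $A^{s}\geq0$, the second uses $A^{s}\leq C^{s}$ traced against $C^{1-s}-B^{1-s}\geq0$, and the third uses $B^{s}\leq C^{s}$ traced against $B^{1-s}\geq0$ (each step an instance of $\operatorname{Tr}[XY]\geq0$ for $X,Y\geq0$). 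Combined with your identity, this proves the lemma. So your instinct that the positivity failure of the min operator is the obstruction was correct, but the remedy is to pass to the max operator $C$, which is comparable to both $A$ and $B$ from above---no resolvent integrals are needed.
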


\begin{proposition}
\label{prop:dist-err-exp-to-renyi}The following inequality holds%
\begin{equation}
E_{d}^{m}(\rho\Vert\sigma)\geq\sup_{\alpha\in(0,1)}\left(  \frac{\alpha
-1}{\alpha}\right)  \left(  m-D_{\alpha}(\rho\Vert\sigma)\right)
\end{equation}

\end{proposition}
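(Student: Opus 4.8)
The plan is to establish the lower bound on the error exponent $E_d^m(\rho\Vert\sigma)$ by exhibiting, for each $\alpha\in(0,1)$, a feasible measurement operator $\Lambda$ for the primal SDP in \eqref{eq:primal-err-exp-distill-dist} and lower-bounding the resulting success probability $\operatorname{Tr}[\Lambda\rho]$. The natural candidate, motivated by the quantum Chernoff/Hoeffding-bound literature and by the appearance of Lemma~\ref{lemma:spectral-ineq}, is a test of the form $\Lambda = \{\rho \geq c\,\sigma\}$ (the projection onto the positive part of $\rho - c\sigma$) for a suitably chosen constant $c$, or equivalently a threshold built from $\rho^s \sigma^{1-s}$. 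The constant $c$ must be tuned so that the constraint $\operatorname{Tr}[\Lambda\sigma]\leq 1/2^m$ holds while $\Lambda\leq I$ is automatic for a projection.

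First I would recall that $E_d^m(\rho\Vert\sigma) = -\log_2\left[1 - \sup_\Lambda \operatorname{Tr}[\Lambda\rho]\right]$, so that bounding the exponent from below is equivalent to lower-bounding the optimal $\operatorname{Tr}[\Lambda\rho]$, i.e., showing $1 - \operatorname{Tr}[\Lambda\rho] \leq 2^{-(\frac{\alpha-1}{\alpha})(m - D_\alpha(\rho\Vert\sigma))}$ for the chosen $\Lambda$. Writing $\varepsilon = 1 - \operatorname{Tr}[\Lambda\rho]$ and using the general bound on the error probability of a threshold test, I would apply Lemma~\ref{lemma:spectral-ineq} with the substitutions $A = \rho$ and $B = c\,\sigma$ (after scaling) to control the quantity $\frac{1}{2}(\operatorname{Tr}[A+B] - \Vert A - B\Vert_1)$, which equals the sum of the two error probabilities of the optimal test, by $\operatorname{Tr}[\rho^s (c\sigma)^{1-s}] = c^{1-s}\operatorname{Tr}[\rho^s\sigma^{1-s}] = c^{1-s}\, 2^{(s-1)D_s(\rho\Vert\sigma)}$, identifying $s = \alpha \in (0,1)$. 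The error term $1 - \operatorname{Tr}[\Lambda\rho]$ is bounded above by this Chernoff-type expression, while the constraint term $\operatorname{Tr}[\Lambda\sigma]$ is simultaneously controlled by the same inequality.

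The key algebraic step is to optimize the free parameter $c$ to balance the two competing requirements: making $\operatorname{Tr}[\Lambda\sigma]$ at most $1/2^m$ while keeping $1 - \operatorname{Tr}[\Lambda\rho]$ small. Concretely, I would choose $c = 2^m$ (or a closely related value) so that the type-II error $\operatorname{Tr}[\Lambda\sigma] \leq 1/2^m$ is met, and then Lemma~\ref{lemma:spectral-ineq} yields $1 - \operatorname{Tr}[\Lambda\rho] \leq (2^m)^{1-\alpha}\, 2^{(\alpha-1)D_\alpha(\rho\Vert\sigma)} = 2^{(1-\alpha)(m - D_\alpha(\rho\Vert\sigma))}$. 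Taking $-\log_2$ of the relation $2^{-E_d^m} = 1 - \operatorname{Tr}[\Lambda\rho] \leq 2^{-(1-\alpha)(D_\alpha(\rho\Vert\sigma) - m)}$ and recognizing $(1-\alpha)(m - D_\alpha) = -\frac{\alpha-1}{\alpha}\cdot\alpha(m - D_\alpha)$ requires care; I must track the exact exponent so that the final factor reads $\frac{\alpha-1}{\alpha}$ rather than $(1-\alpha)$, which is where the precise definition of the test constant and the sign conventions matter.

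The main obstacle I anticipate is exactly this bookkeeping of the exponent's prefactor: Lemma~\ref{lemma:spectral-ineq} naturally produces a $(1-\alpha)$-type factor, whereas the target has $\frac{\alpha-1}{\alpha}$. Reconciling these likely requires either choosing the test threshold $c$ as a function of both $m$ and $\alpha$ (so that the logarithm of $c$ absorbs the extra factor of $1/\alpha$), or reparametrizing the optimization so that the balance condition forces the $1/\alpha$ to appear. Since the constraint $\operatorname{Tr}[\Lambda\sigma]\leq 1/2^m$ is on $\sigma$ and the objective is on $\rho$, I expect the clean route to be to set the threshold at $c = 2^{m'}$ for an auxiliary $m'$ chosen to make the Chernoff exponent match, then verify that the type-II constraint is still satisfied; getting this substitution right, rather than any deep difficulty, is the crux. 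Finally, since the resulting bound holds for every $\alpha\in(0,1)$, I would take the supremum over $\alpha$ to conclude.
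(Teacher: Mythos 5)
Your overall strategy is the paper's own: apply Lemma~\ref{lemma:spectral-ineq} to a pair of scaled operators, take the resulting optimal (likelihood-ratio) test as the feasible $\Lambda$ in \eqref{eq:primal-err-exp-distill-dist}, and tune the threshold so that the type-II constraint $\operatorname{Tr}[\Lambda\sigma]\leq 2^{-m}$ holds. However, there is a genuine gap at exactly the step you flag as the crux, and the one concrete choice you commit to, $c=2^{m}$, provably does not give the claimed bound. With $A=\rho$ and $B=c\sigma$, Lemma~\ref{lemma:spectral-ineq} yields a test $T^{\ast}$ satisfying
\begin{equation}
\operatorname{Tr}[(I-T^{\ast})\rho]+c\operatorname{Tr}[T^{\ast}\sigma]\leq c^{1-\alpha}Q_{\alpha},\qquad Q_{\alpha}\coloneqq\operatorname{Tr}[\rho^{\alpha}\sigma^{1-\alpha}]=2^{(\alpha-1)D_{\alpha}(\rho\Vert\sigma)},
\end{equation}
so that $\operatorname{Tr}[T^{\ast}\sigma]\leq c^{-\alpha}Q_{\alpha}$ and $\operatorname{Tr}[(I-T^{\ast})\rho]\leq c^{1-\alpha}Q_{\alpha}$. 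Setting $c=2^{m}$ does satisfy the constraint whenever $m\leq D_{\alpha}(\rho\Vert\sigma)$, but the type-I bound is then $2^{(1-\alpha)(m-D_{\alpha}(\rho\Vert\sigma))}$, which gives only
\begin{equation}
E_{d}^{m}(\rho\Vert\sigma)\geq\left(  \alpha-1\right)  \left(  m-D_{\alpha}(\rho\Vert\sigma)\right)  ,
\end{equation}
strictly weaker than the claimed $\left(\frac{\alpha-1}{\alpha}\right)\left(m-D_{\alpha}(\rho\Vert\sigma)\right)$ for $\alpha\in(0,1)$. Acknowledging that the prefactor ``requires care'' and that the threshold should ``likely'' depend on $\alpha$ identifies the problem but does not close it; as written, your argument proves a weaker proposition.

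The missing step is to \emph{saturate} the type-II bound rather than set $c=2^{m}$: choose $c$ so that $c^{-\alpha}Q_{\alpha}=2^{-m}$, i.e.,
\begin{equation}
c=\left(  2^{m}Q_{\alpha}\right)  ^{1/\alpha}=2^{m/\alpha}\,2^{(\alpha-1)D_{\alpha}(\rho\Vert\sigma)/\alpha}.
\end{equation}
Then $\operatorname{Tr}[T^{\ast}\sigma]\leq2^{-m}$ by construction, and the type-I error becomes
\begin{equation}
\operatorname{Tr}[(I-T^{\ast})\rho]\leq c^{1-\alpha}Q_{\alpha}=2^{\left(  \frac{1-\alpha}{\alpha}\right)  \left(  m-D_{\alpha}(\rho\Vert\sigma)\right)  },
\end{equation}
which upon taking $-\log_{2}$ and optimizing over $\alpha\in(0,1)$ gives exactly the statement of the proposition. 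This is precisely the paper's proof in a different normalization: there one takes $A=p\rho$, $B=(1-p)\sigma$ with $\frac{1-p}{p}=c$ and picks $p$ so that $\left(\frac{p}{1-p}\right)^{\alpha}Q_{\alpha}=2^{-m}$. The factor $1/\alpha$ that you could not account for arises because it is the $\alpha$-th power of the threshold, not the threshold itself, that must equal $2^{m}Q_{\alpha}$, so solving for $c$ introduces the exponent $1/\alpha$.
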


\begin{proof}
This is similar to the proof of \cite[Proposition~3]{QWW17}. We exploit
Lemma~\ref{lemma:spectral-ineq} to establish the above bound. Recall from
Lemma~\ref{lemma:spectral-ineq} that the following inequality holds for
positive semi-definite operators $A$ and $B$ and for $\alpha\in(0,1)$:%
\begin{align}
\inf_{T:0\leq T\leq I}\operatorname{Tr}[(I-T)A]+\operatorname{Tr}[TB]  &
=\frac{1}{2}\left(  \operatorname{Tr}[A+B]-\left\Vert A-B\right\Vert
_{1}\right) \label{eq:aud-1}\\
&  \leq\operatorname{Tr}[A^{\alpha}B^{1-\alpha}]. \label{eq:aud-2}%
\end{align}
For the first line, see \cite[Eq.~(23)]{AM14}. For $p\in(0,1)$, pick $A=p\rho$
and $B=\left(  1-p\right)  \sigma$. Plugging in to the above inequality, we
find that there exists a measurement operator $T^{\ast}=T(p,\rho,\sigma)$ such
that%
\begin{equation}
p\operatorname{Tr}[(I-T^{\ast})\rho]+(1-p)\operatorname{Tr}[T^{\ast}%
\sigma]\leq p^{\alpha}(1-p)^{1-\alpha}\operatorname{Tr}[\rho^{\alpha}%
\sigma^{1-\alpha}].
\end{equation}
This implies that%
\begin{equation}
p\operatorname{Tr}[(I-T^{\ast})\rho]\leq p^{\alpha}(1-p)^{1-\alpha
}\operatorname{Tr}[\rho^{\alpha}\sigma^{1-\alpha}],
\end{equation}
and in turn that%
\begin{equation}
\operatorname{Tr}[(I-T^{\ast})\rho]\leq\left(  \frac{1-p}{p}\right)
^{1-\alpha}\operatorname{Tr}[\rho^{\alpha}\sigma^{1-\alpha}].
\end{equation}
Similarly, we find that%
\begin{equation}
(1-p)\operatorname{Tr}[T^{\ast}\sigma]\leq p^{\alpha}(1-p)^{1-\alpha
}\operatorname{Tr}[\rho^{\alpha}\sigma^{1-\alpha}]
\end{equation}
implies that%
\begin{equation}
\operatorname{Tr}[T^{\ast}\sigma]\leq\left(  \frac{p}{1-p}\right)  ^{\alpha
}\operatorname{Tr}[\rho^{\alpha}\sigma^{1-\alpha}].
\end{equation}
Now we pick $p\in(0,1)$ such that the following equation is satisfied%
\begin{align}
\frac{1}{2^{m}}  &  =\left(  \frac{p}{1-p}\right)  ^{\alpha}\operatorname{Tr}%
[\rho^{\alpha}\sigma^{1-\alpha}]\\
&  =\left(  \frac{p}{1-p}\right)  ^{\alpha}2^{\left(  \alpha-1\right)
D_{\alpha}(\rho\Vert\sigma)}\\
\Longleftrightarrow\frac{1}{2^{m}}2^{-\left(  \alpha-1\right)  D_{\alpha}%
(\rho\Vert\sigma)}  &  =\left(  \frac{p}{1-p}\right)  ^{\alpha}\\
\Longleftrightarrow2^{m/\alpha}2^{\left(  \alpha-1\right)  D_{\alpha}%
(\rho\Vert\sigma)/\alpha}  &  =\left(  \frac{1-p}{p}\right)  .
\end{align}
Picking $p$ in this way is possible because one more step of the development
above leads to the conclusion that%
\begin{equation}
p=\frac{1}{1+2^{m/\alpha}2^{\left(  \alpha-1\right)  D_{\alpha}(\rho
\Vert\sigma)/\alpha}}\in(0,1).
\end{equation}
Substituting above, we find that%
\begin{align}
\operatorname{Tr}[(I-T^{\ast})\rho]  &  \leq\left(  2^{m/\alpha}2^{\left(
\alpha-1\right)  D_{\alpha}(\rho\Vert\sigma)/\alpha}\right)  ^{1-\alpha
}2^{\left(  \alpha-1\right)  D_{\alpha}(\rho\Vert\sigma)}\\
&  =\left(  2^{\left(  \frac{1-\alpha}{\alpha}\right)  m}2^{-\left(
1-\alpha\right)  ^{2}D_{\alpha}(\rho\Vert\sigma)/\alpha}\right)  2^{\left(
\alpha-1\right)  D_{\alpha}(\rho\Vert\sigma)}\\
&  =2^{\left(  \frac{1-\alpha}{\alpha}\right)  m}2^{\left(  \frac{\alpha
-1}{\alpha}\right)  D_{\alpha}(\rho\Vert\sigma)}\\
&  =2^{-\left(  \frac{1-\alpha}{\alpha}\right)  \left(  D_{\alpha}(\rho
\Vert\sigma)-m\right)  }.
\end{align}
Turning this around, we have shown that there exists a measurement operator
$T^{\ast}$ such that%
\begin{align}
\operatorname{Tr}[T^{\ast}\sigma]  &  \leq\frac{1}{2^{m}},\\
-\log_{2}\left(  1-\operatorname{Tr}[T^{\ast}\rho]\right)   &  \geq\left(
\frac{1-\alpha}{\alpha}\right)  \left(  D_{\alpha}(\rho\Vert\sigma)-m\right)
.
\end{align}
Then by optimizing over all measurement operators and applying definitions, we
conclude that the following inequality holds for all $\alpha\in(0,1)$:%
\begin{equation}
E_{d}^{m}(\rho\Vert\sigma)\geq\left(  \frac{1-\alpha}{\alpha}\right)  \left(
D_{\alpha}(\rho\Vert\sigma)-m\right)  .
\end{equation}
A final optimization over $\alpha\in(0,1)$ leads to the statement of the proposition.
\end{proof}

\bigskip

Obtaining inequalities opposite to the ones above is more involved, and one
typically needs sufficiently large $n$ in order for versions of them to go
through with the correct leading order term. By making use of
Proposition~\ref{prop:sdp-err-exp-div} and Corollary~\ref{prop:sc-reduction},
we see that for $E_{d}^{m}(\rho\Vert\sigma)$, this was accomplished in
\cite{N06}, and for $\widetilde{E}_{d}^{m}(\rho\Vert\sigma)$, it was done in
\cite{MO13}.

\subsection{Relationship to smooth min-relative entropy}

Here I establish a direct relationship between the distillation error exponent
and the smooth min-relative entropy.

\begin{proposition}
\label{prop:smooth-dmin-to-dist-err-exp}Let $\rho$ and $\sigma$ be states and
let $\varepsilon\in(0,1)$. Then%
\begin{equation}
D_{\min}^{\varepsilon}(\rho\Vert\sigma)=m\qquad\Longleftrightarrow\qquad
E_{d}^{m}(\rho\Vert\sigma)=-\log_{2}\varepsilon.
\end{equation}

\end{proposition}

\begin{proof}
For fixed $\varepsilon\in\left[  0,1\right]  $, let $\Lambda$ be an optimal
measurement operator for $D_{\min}^{\varepsilon}(\rho\Vert\sigma)=m$ in
\eqref{eq:primal-SDP-smooth-min}. Then it follows that%
\begin{equation}
\operatorname{Tr}[\Lambda\sigma]=\frac{1}{2^{m}},\qquad\operatorname{Tr}%
[\left(  I-\Lambda\right)  \rho]=\varepsilon.
\end{equation}
The same measurement operator satisfies the constraints for $E_{d}^{m}%
(\rho\Vert\sigma)$ in \eqref{eq:primal-err-exp-distill-dist}\ and is thus
achievable for $E_{d}^{m}(\rho\Vert\sigma)$, implying that $E_{d}^{m}%
(\rho\Vert\sigma)\geq-\log_{2}\varepsilon$.

Now suppose that $\mu$ and $X$ are optimal for the dual formulation of
$D_{\min}^{\varepsilon}(\rho\Vert\sigma)$ in \eqref{eq:primal-SDP-smooth-min}.
Then it follows that%
\begin{equation}
\frac{1}{2^{m}}=\mu\left(  1-\varepsilon\right)  -\operatorname{Tr}[X],
\end{equation}
so that the objective function of the dual of $E_{d}^{m}(\rho\Vert\sigma)$ in
\eqref{eq:dual-err-exp-distill-dist}\ is equal to%
\begin{equation}
-\log_{2}\left(  1-\lambda\left(  \mu\left(  1-\varepsilon\right)
-\operatorname{Tr}[X]\right)  -\operatorname{Tr}[W]\right)  .
\end{equation}
Now choosing $\lambda=1/\mu$ and $W=X/\mu$, we find that these values are
feasible for the dual of $E_{d}^{m}(\rho\Vert\sigma)$, while the objective
function evaluates to $-\log_{2}\varepsilon$. So this implies that $E_{d}%
^{m}(\rho\Vert\sigma)\leq-\log_{2}\varepsilon$.

Thus, it follows that%
\begin{equation}
D_{\min}^{\varepsilon}(\rho\Vert\sigma)=m\qquad\Longrightarrow\qquad E_{d}%
^{m}(\rho\Vert\sigma)=-\log_{2}\varepsilon.
\end{equation}
To see the opposite implication, we can follow a similar method.
\end{proof}

\bigskip

The statement above is related to \cite[Eqs.~(11)--(12)]{Ren16}. It is also
stated around \cite[Eq.~(4)]{GV16}.\bigskip


The following is discussed for the classical case in \cite{Ren16}, and it has
a simple extension to the quantum case.

\begin{proposition}
Let $\varepsilon\in(0,1)$ and set $m\coloneqq \log_{2}\left(  \frac
{1}{\varepsilon}\right)  $. Then the following identity holds%
\begin{equation}
2^{-\widetilde{E}_{d}^{m}(\sigma\Vert\rho)}+2^{-D_{\min}^{\varepsilon}%
(\rho\Vert\sigma)}=1. \label{eq:smooth-dmin-to-sc-exp}%
\end{equation}
This is equivalent to the following:%
\begin{equation}
D_{\min}^{\varepsilon}(\rho\Vert\sigma)=E_{d}^{m}(\sigma\Vert\rho).
\label{eq:smooth-dmin-to-err-exp}%
\end{equation}

\end{proposition}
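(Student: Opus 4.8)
The plan is to reduce both sides to the SDP expressions already established and then connect them by the single complementation substitution $\Lambda\mapsto I-\Lambda$. First I would invoke Corollary~\ref{prop:sc-reduction}, read with the arguments swapped, to write
\[\widetilde{E}_{d}^{m}(\sigma\Vert\rho)=-\log_{2}\sup_{\Lambda\geq 0}\left\{\operatorname{Tr}[\Lambda\sigma]:\operatorname{Tr}[\Lambda\rho]\leq\frac{1}{2^{m}},\ \Lambda\leq I\right\}.\]
Since $m=\log_{2}(1/\varepsilon)$ gives $1/2^{m}=\varepsilon$, exponentiating yields
\[2^{-\widetilde{E}_{d}^{m}(\sigma\Vert\rho)}=\sup_{0\leq\Lambda\leq I}\left\{\operatorname{Tr}[\Lambda\sigma]:\operatorname{Tr}[\Lambda\rho]\leq\varepsilon\right\}.\]
In parallel, the primal form \eqref{eq:primal-SDP-smooth-min} gives directly
\[2^{-D_{\min}^{\varepsilon}(\rho\Vert\sigma)}=\inf_{0\leq\Lambda\leq I}\left\{\operatorname{Tr}[\Lambda\sigma]:\operatorname{Tr}[\Lambda\rho]\geq 1-\varepsilon\right\}.\]

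The key step is to observe that $\Lambda\mapsto I-\Lambda$ is a bijection of the feasible set $\{0\leq\Lambda\leq I\}$. Under it, the constraint $\operatorname{Tr}[\Lambda\rho]\leq\varepsilon$ becomes $\operatorname{Tr}[(I-\Lambda)\rho]\geq 1-\varepsilon$, and the objective transforms as $\operatorname{Tr}[\Lambda\sigma]=1-\operatorname{Tr}[(I-\Lambda)\sigma]$. Hence the supremum defining $2^{-\widetilde{E}_{d}^{m}(\sigma\Vert\rho)}$ equals $1$ minus the infimum defining $2^{-D_{\min}^{\varepsilon}(\rho\Vert\sigma)}$, which is precisely \eqref{eq:smooth-dmin-to-sc-exp}.

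For the equivalence with \eqref{eq:smooth-dmin-to-err-exp}, I would appeal to the general relation \eqref{eq:dist-dist-err-exp-sc-exp-eq}, applied with arguments $\sigma\Vert\rho$, namely $2^{-E_{d}^{m}(\sigma\Vert\rho)}=1-2^{-\widetilde{E}_{d}^{m}(\sigma\Vert\rho)}$. Combining this with \eqref{eq:smooth-dmin-to-sc-exp} gives $2^{-E_{d}^{m}(\sigma\Vert\rho)}=2^{-D_{\min}^{\varepsilon}(\rho\Vert\sigma)}$, and taking logarithms yields the claimed identity.

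There is no deep obstacle here; the work is entirely bookkeeping. The only things to watch are keeping straight which state plays the role of the first versus the second argument after the swap (so that the objective and constraint states are paired correctly), and confirming that $\Lambda\mapsto I-\Lambda$ is feasibility-preserving in both directions, so that the supremum and infimum genuinely correspond under the substitution rather than merely bounding one another.
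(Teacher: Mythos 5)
Your proposal is correct and is essentially the paper's own proof: both reduce the two quantities to their SDP forms (Corollary~\ref{prop:sc-reduction} with the arguments swapped, and \eqref{eq:primal-SDP-smooth-min}), identify them via the complementation $\Lambda\mapsto I-\Lambda$ on the feasible set $0\leq\Lambda\leq I$, and then deduce \eqref{eq:smooth-dmin-to-err-exp} from \eqref{eq:dist-dist-err-exp-sc-exp-eq}. No differences worth noting.
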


\begin{proof}
This identity is essentially the same as that given in \cite[Eq.~(13)]{Ren16},
and it is a direct consequence of definitions. Note that%
\begin{align}
2^{-\widetilde{E}_{d}^{m}(\sigma\Vert\rho)}  &  =\sup_{\Lambda\geq0}\left\{
\operatorname{Tr}[\Lambda\sigma]:\operatorname{Tr}[\Lambda\rho]\leq
\varepsilon,\ \Lambda\leq I\right\}  ,\\
2^{-D_{\min}^{\varepsilon}(\rho\Vert\sigma)}  &  =\inf_{\Lambda\geq0}\left\{
\operatorname{Tr}[\Lambda\sigma]:\operatorname{Tr}[\Lambda\rho]\geq
1-\varepsilon,\ \Lambda\leq I\right\}  .
\end{align}
Thus it follows that%
\begin{align}
1-2^{-\widetilde{E}_{d}^{m}(\sigma\Vert\rho)}  &  =1-\sup_{\Lambda\geq
0}\left\{  \operatorname{Tr}[\Lambda\sigma]:\operatorname{Tr}[\Lambda\rho
]\leq\varepsilon,\ \Lambda\leq I\right\} \\
&  =\inf_{\Lambda\geq0}\left\{  1-\operatorname{Tr}[\Lambda\sigma
]:\operatorname{Tr}[\Lambda\rho]\leq\varepsilon,\ \Lambda\leq I\right\} \\
&  =\inf_{\Lambda\geq0}\left\{  \operatorname{Tr}[\left(  I-\Lambda\right)
\sigma]:\operatorname{Tr}[\Lambda\rho]\leq\varepsilon,\ \Lambda\leq I\right\}
\\
&  =\inf_{\Lambda\geq0}\left\{  \operatorname{Tr}[\Lambda\sigma
]:\operatorname{Tr}[\left(  I-\Lambda\right)  \rho]\leq\varepsilon
,\ \Lambda\leq I\right\} \\
&  =2^{-D_{\min}^{\varepsilon}(\rho\Vert\sigma)}.
\end{align}

The equality in \eqref{eq:smooth-dmin-to-err-exp} follows from
\eqref{eq:smooth-dmin-to-sc-exp} and \eqref{eq:dist-dist-err-exp-sc-exp-eq}.
\end{proof}

\section{Distinguishability dilution}

\subsection{Semi-definite programs}

\begin{proposition}
\label{prop:dist-dil-sdp} Let $\rho$ and $\sigma$ be states and $m\geq0$. The
following equality holds%
\begin{align}
E_{c}^{m}(\rho\Vert\sigma)  &  =-\log_{2}\inf_{Z,\widetilde{\rho}\geq
0}\left\{  \operatorname{Tr}[Z]:Z\geq\widetilde{\rho}-\rho,\ \widetilde{\rho
}\leq2^{m}\sigma,\ \operatorname{Tr}[\widetilde{\rho}]=1\right\}
\label{eq:primal-cost-err-exp}\\
&  =-\log_{2}\sup_{\kappa,R,S\geq0}\left\{  \kappa-\operatorname{Tr}%
[R\rho]-2^{m}\operatorname{Tr}[S\sigma]:R\leq I,\ \kappa I\leq R+S\right\}  .
\label{eq:dual-cost-err-exp}%
\end{align}
The complementary slackness conditions for optimal $Z$, $\widetilde{\rho}$,
$\kappa$, $R$, and $S$ are as follows:%
\begin{align}
Z  &  =RZ,\\
\kappa\widetilde{\rho}  &  =\left(  R+S\right)  \widetilde{\rho},\\
ZR  &  =\left(  \widetilde{\rho}-\rho\right)  R,\\
2^{m}\sigma S  &  =\widetilde{\rho}S.
\end{align}
The first quantity in \eqref{eq:primal-cost-err-exp}\ can be written as%
\begin{equation}
E_{c}^{m}(\rho\Vert\sigma)=-\log_{2}\inf_{\widetilde{\rho}\in\mathcal{S}%
,\ \widetilde{\rho}\leq2^{m}\sigma}\frac{1}{2}\left\Vert \widetilde{\rho}%
-\rho\right\Vert _{1},
\end{equation}
where $\mathcal{S}$ denotes the set of density operators.
\end{proposition}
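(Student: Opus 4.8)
The plan is to mirror the structure of the proof of Proposition~\ref{prop:sdp-err-exp-div}: first reduce the operational definition of $E_{c}^{m}(\rho\Vert\sigma)$ to a trace-distance optimization over states, and then recognize this as a semi-definite program whose dual and complementary slackness conditions are read off by Lagrange duality. For the reduction, I would first observe that both inputs in the definition, namely $|0\rangle\!\langle0|$ and $\pi_{2^{m^{\prime}}}$, are diagonal in the computational basis; hence precomposing any feasible channel $\mathcal{P}$ with the completely dephasing channel $\Delta$ changes neither constraint nor the error, so it is without loss of generality to take $\mathcal{P}=\mathcal{P}\circ\Delta$. Such a map is a measure-prepare channel determined by the two output states $\omega_{0}\coloneqq\mathcal{P}(|0\rangle\!\langle0|)$ and $\omega_{1}\coloneqq\mathcal{P}(|1\rangle\!\langle1|)$. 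The constraint $\mathcal{P}(\pi_{2^{m^{\prime}}})=\sigma$ then reads $\tfrac{1}{2^{m^{\prime}}}\omega_{0}+(1-\tfrac{1}{2^{m^{\prime}}})\omega_{1}=\sigma$; solving for $\omega_{1}$, this is achievable with a legitimate state $\omega_{1}$ if and only if $\omega_{0}\leq 2^{m^{\prime}}\sigma$, the trace condition on $\omega_{1}$ being automatic. Writing $\widetilde{\rho}\coloneqq\omega_{0}$, the error to be minimized is $\tfrac{1}{2}\Vert\widetilde{\rho}-\rho\Vert_{1}$, and since the constraint $\widetilde{\rho}\leq 2^{m^{\prime}}\sigma$ only relaxes as $m^{\prime}$ grows, the optimal choice is $m^{\prime}=m$. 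This yields the final identity $2^{-E_{c}^{m}(\rho\Vert\sigma)}=\inf_{\widetilde{\rho}\in\mathcal{S},\,\widetilde{\rho}\leq 2^{m}\sigma}\tfrac{1}{2}\Vert\widetilde{\rho}-\rho\Vert_{1}$.

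To obtain the primal SDP in \eqref{eq:primal-cost-err-exp}, I would insert the standard one-sided representation $\tfrac{1}{2}\Vert\widetilde{\rho}-\rho\Vert_{1}=\inf_{Z\geq0}\{\operatorname{Tr}[Z]:Z\geq\widetilde{\rho}-\rho\}$, which is valid because $\widetilde{\rho}$ and $\rho$ have equal trace and is optimized by $Z=(\widetilde{\rho}-\rho)_{+}$; merging the two infima gives \eqref{eq:primal-cost-err-exp} exactly. For the dual in \eqref{eq:dual-cost-err-exp}, I would compute the Lagrange dual, assigning $R\geq0$ to $Z\geq\widetilde{\rho}-\rho$, $S\geq0$ to $\widetilde{\rho}\leq 2^{m}\sigma$, and a real multiplier $\kappa$ to $\operatorname{Tr}[\widetilde{\rho}]=1$; collecting the terms linear in $Z$ and in $\widetilde{\rho}$, the infimum over $Z,\widetilde{\rho}\geq0$ is finite precisely when $R\leq I$ and $\kappa I\leq R+S$, leaving the objective $\kappa-\operatorname{Tr}[R\rho]-2^{m}\operatorname{Tr}[S\sigma]$. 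Because this objective increases in $\kappa$ while feasibility permits $\kappa$ up to the smallest eigenvalue of $R+S$, which is nonnegative, one may restrict to $\kappa\geq0$ without changing the supremum. Strong duality then follows from Slater's condition: $\widetilde{\rho}=\sigma$, $Z=(\sigma-\rho)_{+}$ is feasible for the primal, while $\kappa=\tfrac{1}{2}$, $R=S=I/2$ is strictly feasible for the dual.

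Finally, the four complementary slackness conditions follow from the general SDP relations recorded in the appendix together with the elementary fact that $\operatorname{Tr}[AB]=0$ for positive semi-definite $A,B$ forces $AB=0$: equating the primal and dual objective values at optimality makes each product of a dual variable with its associated primal slack traceless and hence zero, and taking adjoints produces the stated operator identities. I expect the only genuine obstacle to lie in the operational reduction of the first paragraph, specifically in verifying rigorously that input dephasing costs nothing, that the residual map is exactly the two-state measure-prepare channel, and that feasibility of $\omega_{1}$ is equivalent to $\widetilde{\rho}\leq 2^{m}\sigma$ (including the degenerate behavior as $m^{\prime}\to0$). Once this reduction is in hand, the SDP duality and complementary slackness steps are routine.
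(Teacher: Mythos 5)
Your proposal is correct and follows essentially the same route as the paper's proof: reduce to a classical--quantum (measure-and-prepare) channel, eliminate the second output state via the equivalence with $\widetilde{\rho}\leq 2^{m'}\sigma$, use monotonicity in $m'$ to set $m'=m$, insert the one-sided SDP representation of the trace distance, compute the dual, verify Slater's condition, and read off complementary slackness. The only cosmetic differences are that you justify the classical--quantum reduction explicitly by precomposing with the dephasing channel (a step the paper merely asserts) and that you obtain the dual by a direct Lagrangian calculation rather than the paper's block-matrix standard form.
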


\begin{proof}
We begin by determining how to write the quantity $E_{c}^{m}(\rho\Vert\sigma)$
as a semi-definite program. Recall that%
\begin{equation}
E_{c}^{m}(\rho\Vert\sigma)\coloneqq -\log_{2}\inf_{\mathcal{P}\in\text{CPTP}%
}\left\{  \varepsilon:\mathcal{P}(|0\rangle\!\langle0|)\approx_{\varepsilon
}\rho,\mathcal{P}(\pi_{2^{m^{\prime}}})=\sigma,\ m^{\prime}\leq m\right\}  .
\end{equation}
In this case, since we are starting from classical states, it suffices for
$\mathcal{P}$ to be a classical--quantum channel of the following form:%
\begin{equation}
\mathcal{P}(\omega)=\langle0|\omega|0\rangle\widetilde{\rho}+\langle
1|\omega|1\rangle\tau.
\end{equation}
The equality constraint $\mathcal{P}(\pi_{2^{m^{\prime}}})=\sigma$ implies
that%
\begin{align}
\sigma &  =\langle0|\left(  \frac{1}{2^{m^{\prime}}}|0\rangle\!\langle
0|+\left(  1-\frac{1}{2^{m^{\prime}}}\right)  |1\rangle\!\langle1|\right)
|0\rangle\widetilde{\rho}\nonumber\\
&  \qquad+\langle1|\left(  \frac{1}{2^{m^{\prime}}}|0\rangle\!\langle
0|+\left(  1-\frac{1}{2^{m^{\prime}}}\right)  |1\rangle\!\langle1|\right)
|1\rangle\tau\\
&  =\frac{1}{2^{m^{\prime}}}\widetilde{\rho}+\left(  1-\frac{1}{2^{m^{\prime}%
}}\right)  \tau.
\end{align}
Consider that this implies that%
\begin{equation}
\widetilde{\rho}=2^{m^{\prime}}\sigma-\left(  2^{m^{\prime}}-1\right)  \tau.
\end{equation}
The constraint $\mathcal{P}(|0\rangle\!\langle0|)\approx_{\varepsilon}\rho$
implies that%
\begin{equation}
\frac{1}{2}\left\Vert \widetilde{\rho}-\rho\right\Vert _{1}\leq\varepsilon.
\end{equation}
Now considering that%
\begin{equation}
\frac{1}{2}\left\Vert \widetilde{\rho}-\rho\right\Vert _{1}=\inf_{Z\geq
0}\left\{  \operatorname{Tr}[Z]:Z\geq\widetilde{\rho}-\rho\right\}  ,
\end{equation}
we find that%
\begin{align}
&  E_{c}^{m}(\rho\Vert\sigma)\nonumber\\
&  =-\log_{2}\inf_{\varepsilon\in\left[  0,1\right]  ,Z,\widetilde{\rho}%
,\tau\geq0}\left\{
\begin{array}
[c]{c}%
\varepsilon:\varepsilon\geq\operatorname{Tr}[Z],Z\geq\widetilde{\rho}%
-\rho,\widetilde{\rho}=2^{m^{\prime}}\sigma-\left(  2^{m^{\prime}}-1\right)
\tau,\\
m^{\prime}\leq m,\operatorname{Tr}[\widetilde{\rho}]=1,\operatorname{Tr}%
[\tau]=1.
\end{array}
\right\} \\
&  =-\log_{2}\inf_{Z,\widetilde{\rho},\tau\geq0}\left\{
\begin{array}
[c]{c}%
\operatorname{Tr}[Z]:Z\geq\widetilde{\rho}-\rho,\widetilde{\rho}=2^{m^{\prime
}}\sigma-\left(  2^{m^{\prime}}-1\right)  \tau,\\
m^{\prime}\leq m,\operatorname{Tr}[\widetilde{\rho}]=1,\operatorname{Tr}%
[\tau]=1.
\end{array}
\right\}  .
\end{align}
Now consider that the condition that there exists a state $\tau$ satisfying
\begin{equation}
\widetilde{\rho}=2^{m^{\prime}}\sigma-\left(  2^{m^{\prime}}-1\right)  \tau
\end{equation}
is equivalent to the condition%
\begin{equation}
\widetilde{\rho}\leq2^{m^{\prime}}\sigma.
\end{equation}
If the first condition is true, then the second one clearly is. If the second
condition is true, then we can set%
\begin{equation}
\tau=\frac{2^{m^{\prime}}\sigma-\rho}{2^{m^{\prime}}-1},
\end{equation}
which is a legitimate state. So we get the further simplification:%
\begin{align}
E_{c}^{m}(\rho\Vert\sigma)  &  =-\log_{2}\inf_{Z,\widetilde{\rho}\geq
0}\left\{
\begin{array}
[c]{c}%
\operatorname{Tr}[Z]:Z\geq\widetilde{\rho}-\rho,\ \widetilde{\rho}%
\leq2^{m^{\prime}}\sigma,\\
m^{\prime}\leq m,\operatorname{Tr}[\widetilde{\rho}]=1.
\end{array}
\right\} \\
&  =-\log_{2}\inf_{Z,\widetilde{\rho}\geq0}\left\{  \operatorname{Tr}%
[Z]:Z\geq\widetilde{\rho}-\rho,\ \widetilde{\rho}\leq2^{m}\sigma
,\ \operatorname{Tr}[\widetilde{\rho}]=1\right\}  .
\end{align}

We can use the standard form of SDPs to compute the dual:%
\begin{align}
&  \sup_{X\geq0}\left\{  \operatorname{Tr}[AX]:\Phi(X)\leq B\right\}  ,\\
&  \inf_{Y\geq0}\left\{  \operatorname{Tr}[BY]:\Phi^{\dag}(Y)\geq A\right\}  .
\end{align}
Here we identify%
\begin{align}
Y  &  =%
\begin{bmatrix}
Z & 0\\
0 & \widetilde{\rho}%
\end{bmatrix}
,\qquad B=%
\begin{bmatrix}
I & 0\\
0 & 0
\end{bmatrix}
,\label{eq:err-exp-dist-cost-1}\\
\Phi^{\dag}(Y)  &  =\text{diag}(Z-\widetilde{\rho},-\widetilde{\rho
},\operatorname{Tr}[\widetilde{\rho}],-\operatorname{Tr}[\widetilde{\rho}]),\\
A  &  =\text{diag}(-\rho,-2^{m}\sigma,1,-1). \label{eq:err-exp-dist-cost-3}%
\end{align}
Now setting%
\begin{equation}
X=\text{diag}\left(  R,S,\kappa_{1},\kappa_{2}\right)  ,
\label{eq:err-exp-dist-cost-4}%
\end{equation}
we find that%
\begin{align}
\operatorname{Tr}[\Phi^{\dag}(Y)X]  &  =\operatorname{Tr}[\left(
Z-\widetilde{\rho}\right)  R]-\operatorname{Tr}[\widetilde{\rho}S]+\left(
\kappa_{1}-\kappa_{2}\right)  \operatorname{Tr}[\widetilde{\rho}]\\
&  =\operatorname{Tr}[ZR]+\operatorname{Tr}[\widetilde{\rho}\left(  \left[
\kappa_{1}-\kappa_{2}\right]  I-R-S\right)  ]\\
&  =\operatorname{Tr}\left[
\begin{bmatrix}
Z & 0\\
0 & \widetilde{\rho}%
\end{bmatrix}%
\begin{bmatrix}
R & 0\\
0 & \left[  \kappa_{1}-\kappa_{2}\right]  I-R-S
\end{bmatrix}
\right] \\
&  =\operatorname{Tr}[Y\Phi(X)].
\end{align}
So we conclude that%
\begin{equation}
\Phi(X)=%
\begin{bmatrix}
R & 0\\
0 & \left[  \kappa_{1}-\kappa_{2}\right]  I-R-S
\end{bmatrix}
. \label{eq:err-exp-dist-cost-5}%
\end{equation}
Then the dual is given by%
\begin{align}
&  \sup_{X\geq0}\left\{  \operatorname{Tr}[AX]:\Phi(X)\leq B\right\} \notag \\
&  =\sup_{R,S,\kappa_{1},\kappa_{2}\geq0}\left\{  \kappa_{1}-\kappa
_{2}-\operatorname{Tr}[R\rho]-2^{m}\operatorname{Tr}[S\sigma]:R\leq
I,\ \left[  \kappa_{1}-\kappa_{2}\right]  I-R-S\leq0\right\}  .
\end{align}
This can be rewritten as follows:%
\begin{equation}
\sup_{R,S\geq0,\kappa\in\mathbb{R}}\left\{  \kappa-\operatorname{Tr}%
[R\rho]-2^{m}\operatorname{Tr}[S\sigma]:R\leq I,\ \kappa I\leq R+S\right\}  .
\end{equation}
Since we know that the minimum value of the primal is $\geq0$, we can then
optimize exclusively over $\kappa\geq0$ in the dual. The final form is as
follows:%
\begin{equation}
\sup_{\kappa,R,S\geq0}\left\{  \kappa-\operatorname{Tr}[R\rho]-2^{m}%
\operatorname{Tr}[S\sigma]:R\leq I,\ \kappa I\leq R+S\right\}  .
\end{equation}

Strong duality holds because the values $\widetilde{\rho}=\sigma$ and
$Z=\left(  \sigma-\rho\right)  _{+}$ are feasible for the primal, while the
values $R=I/2$, $S=I$, and $\kappa=1/2$ are strictly feasible for the dual.
The complementary slackness conditions follow by applying
\eqref{eq:comp-slack-1}--\eqref{eq:comp-slack-2} to
\eqref{eq:err-exp-dist-cost-1}--\eqref{eq:err-exp-dist-cost-4} and \eqref{eq:err-exp-dist-cost-5}.
\end{proof}

\begin{corollary}
\label{cor:sc-exp-dist-dil}The following equalities hold%
\begin{align}
&  \widetilde{E}_{c}^{m}(\rho\Vert\sigma)\nonumber\\
&  =-\log_{2}\left[  1-\inf_{Z,\widetilde{\rho}\geq0}\left\{
\operatorname{Tr}[Z]:Z\geq\widetilde{\rho}-\rho,\ \widetilde{\rho}\leq
2^{m}\sigma,\ \operatorname{Tr}[\widetilde{\rho}]=1\right\}  \right] \\
&  =-\log_{2}\left[  1-\sup_{\kappa,R,S\geq0}\left\{  \kappa-\operatorname{Tr}%
[R\rho]-2^{m}\operatorname{Tr}[S\sigma]:R\leq I,\ \kappa I\leq R+S\right\}
\right]  .
\end{align}

\end{corollary}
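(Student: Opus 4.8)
The plan is to obtain this corollary as an immediate consequence of the defining relation between the dilution error exponent and strong converse exponent, combined with the SDP expressions already established in Proposition~\ref{prop:dist-dil-sdp}. This mirrors exactly how Corollary~\ref{prop:sc-reduction} was derived from Proposition~\ref{prop:sdp-err-exp-div}, so no new machinery is needed.

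First I would recall the identity in \eqref{eq:relate-err-exp-sc-exp-cost}, namely that $2^{-E_{c}^{m}(\rho\Vert\sigma)} = 1 - 2^{-\widetilde{E}_{c}^{m}(\rho\Vert\sigma)}$, which holds directly from the definitions, since the largest achievable value of $1-\varepsilon$ is one minus the smallest achievable value of $\varepsilon$ over the same feasible set of channels. Rearranging this and taking $-\log_{2}$ of both sides gives
\[
\widetilde{E}_{c}^{m}(\rho\Vert\sigma) = -\log_{2}\!\left(1 - 2^{-E_{c}^{m}(\rho\Vert\sigma)}\right).
\]
Next I would invoke Proposition~\ref{prop:dist-dil-sdp}, which expresses $2^{-E_{c}^{m}(\rho\Vert\sigma)}$ simultaneously as the primal infimum over $Z,\widetilde{\rho}$ and, by the strong duality already verified there, as the dual supremum over $\kappa, R, S$. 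Substituting each of these two expressions for $2^{-E_{c}^{m}(\rho\Vert\sigma)}$ into the display above yields the two claimed formulas for $\widetilde{E}_{c}^{m}(\rho\Vert\sigma)$.

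There is essentially no obstacle here; the only point worth checking is that the argument of the logarithm is admissible. Since $2^{-E_{c}^{m}(\rho\Vert\sigma)}$ equals the optimal transformation error $\varepsilon\in[0,1]$, the quantity $1 - 2^{-E_{c}^{m}(\rho\Vert\sigma)}$ lies in $[0,1]$ as well, so both logarithms are well defined and $\widetilde{E}_{c}^{m}(\rho\Vert\sigma)\geq 0$. Thus the proof reduces to the short substitution described above.
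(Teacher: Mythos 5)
Your proof is correct and is essentially identical to the paper's own argument: the paper likewise obtains the corollary as a direct consequence of the relation in \eqref{eq:relate-err-exp-sc-exp-cost} together with the primal and dual SDP expressions of Proposition~\ref{prop:dist-dil-sdp}. The extra check that the argument of the logarithm lies in $[0,1]$ is a harmless bonus.
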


\begin{proof}
Direct consequence of definitions and the previous proposition.
\end{proof}

\begin{proposition}
Let $\rho$ and $\sigma$ be states, and let $\mathcal{N}$ be a positive,
trace-preserving map. Then the following data-processing inequalities hold%
\begin{align}
E_{c}^{m}(\rho\Vert\sigma)  &  \geq E_{c}^{m}(\mathcal{N}(\rho)\Vert
\mathcal{N}(\sigma)),\label{eq:dp-dist-dilu-exp}\\
\widetilde{E}_{c}^{m}(\rho\Vert\sigma)  &  \leq\widetilde{E}_{c}%
^{m}(\mathcal{N}(\rho)\Vert\mathcal{N}(\sigma)).
\label{eq:dp-dist-dilu-sc-exp}%
\end{align}

\end{proposition}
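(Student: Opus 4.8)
The plan is to follow the same adjoint-map strategy used for the distillation exponents, now feeding in the dilution SDPs from Proposition~\ref{prop:dist-dil-sdp}. I would work from the dual form \eqref{eq:dual-cost-err-exp} and abbreviate the quantity inside the logarithm as
\[
g(\rho,\sigma)\coloneqq\sup_{\kappa,R,S\geq0}\left\{\kappa-\operatorname{Tr}[R\rho]-2^{m}\operatorname{Tr}[S\sigma]:R\leq I,\ \kappa I\leq R+S\right\},
\]
so that $E_{c}^{m}(\rho\Vert\sigma)=-\log_{2}g(\rho,\sigma)$. Because $-\log_{2}(\cdot)$ is order-reversing, establishing \eqref{eq:dp-dist-dilu-exp} amounts to comparing $g(\rho,\sigma)$ with $g(\mathcal{N}(\rho),\mathcal{N}(\sigma))$ in the appropriate sense.

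The mechanism is the adjoint $\mathcal{N}^{\dag}$, which is positive because $\mathcal{N}$ is, and unital because $\mathcal{N}$ is trace preserving. Given a feasible triple $(\kappa,R,S)$ for one of the two programs, I would form $(\kappa,\mathcal{N}^{\dag}(R),\mathcal{N}^{\dag}(S))$ as a candidate for the other. Positivity preserves $R,S\geq0$; unitality turns $R\leq I$ into $\mathcal{N}^{\dag}(R)\leq\mathcal{N}^{\dag}(I)=I$; and applying $\mathcal{N}^{\dag}$ to $\kappa I\leq R+S$ yields $\kappa I\leq\mathcal{N}^{\dag}(R)+\mathcal{N}^{\dag}(S)$, so every constraint survives. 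The objective is transported by the adjoint identities $\operatorname{Tr}[\mathcal{N}^{\dag}(R)\rho]=\operatorname{Tr}[R\mathcal{N}(\rho)]$ and $\operatorname{Tr}[\mathcal{N}^{\dag}(S)\sigma]=\operatorname{Tr}[S\mathcal{N}(\sigma)]$, which carry its value across intact. Relating the two suprema and taking $-\log_{2}$ then delivers \eqref{eq:dp-dist-dilu-exp}.

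For the strong-converse inequality \eqref{eq:dp-dist-dilu-sc-exp} I would not repeat the argument but instead invoke the identity \eqref{eq:relate-err-exp-sc-exp-cost}: since $\widetilde{E}_{c}^{m}=-\log_{2}(1-2^{-E_{c}^{m}})$ and the map $x\mapsto-\log_{2}(1-2^{-x})$ is decreasing, the inequality for $E_{c}^{m}$ automatically flips to the opposite one for $\widetilde{E}_{c}^{m}$; equivalently, one may read it off from Corollary~\ref{cor:sc-exp-dist-dil}, whose optimization is the same $g$. A parallel and more operational route works directly with the primal form \eqref{eq:primal-cost-err-exp}: push a near-optimal $\widetilde{\rho}$ through $\mathcal{N}$, use that $\widetilde{\rho}\leq2^{m}\sigma$ is preserved by the positive map to get $\mathcal{N}(\widetilde{\rho})\leq2^{m}\mathcal{N}(\sigma)$, and invoke trace-norm contractivity $\frac{1}{2}\Vert\mathcal{N}(\widetilde{\rho}-\rho)\Vert_{1}\leq\frac{1}{2}\Vert\widetilde{\rho}-\rho\Vert_{1}$.

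The step I expect to be the main obstacle is fixing the direction of the inequality. Both the adjoint pushforward and the primal construction most naturally send a feasible point of one pair to a feasible point of the other with the objective controlled in only one direction, and since $E_{c}^{m}=-\log_{2}g$ reverses order, a single misstep in which supremum is being bounded would yield the opposite conclusion. I would therefore fix at the outset which of $g(\rho,\sigma)$ and $g(\mathcal{N}(\rho),\mathcal{N}(\sigma))$ is lower bounded by the transported feasible point, and sanity-check the resulting monotonicity on simple instances before committing to the sign; once that is settled, the remaining feasibility verifications for $\mathcal{N}^{\dag}$ are routine.
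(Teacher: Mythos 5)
Your mechanism is exactly the paper's: the paper's proof takes a feasible triple $(\kappa,R,S)$ for the dual program \eqref{eq:dual-cost-err-exp} evaluated on the pair $(\mathcal{N}(\rho),\mathcal{N}(\sigma))$, pulls it back to $(\kappa,\mathcal{N}^{\dag}(R),\mathcal{N}^{\dag}(S))$ using positivity and unitality of $\mathcal{N}^{\dag}$ precisely as you describe, and then invokes \eqref{eq:relate-err-exp-sc-exp-cost} for the strong-converse inequality, again as you propose. Your alternative primal route (pushing $\widetilde{\rho}$ forward through $\mathcal{N}$, preserving $\widetilde{\rho}\leq2^{m}\sigma$ by positivity, and using trace-norm contractivity of positive trace-preserving maps) does not appear in the paper but is equally valid and arguably more transparent.

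However, the obstacle you flagged is real and deserves a sharper resolution than a sanity check: the transport works in only one direction, and carrying it out proves the \emph{reverse} of the printed inequalities. Since the constraints $R\leq I$ and $\kappa I\leq R+S$ are state-independent, pulling back a feasible triple for the $(\mathcal{N}(\rho),\mathcal{N}(\sigma))$ program yields a feasible triple for the $(\rho,\sigma)$ program with the \emph{same} objective value (there is no analogous forward transport of dual variables), whence $g(\rho,\sigma)\geq g(\mathcal{N}(\rho),\mathcal{N}(\sigma))$. Because $g$ equals the \emph{minimal dilution error} by the primal form \eqref{eq:primal-cost-err-exp}, and $E_{c}^{m}=-\log_{2}g$, this gives
\begin{equation}
E_{c}^{m}(\rho\Vert\sigma)\leq E_{c}^{m}(\mathcal{N}(\rho)\Vert\mathcal{N}(\sigma)),\qquad\widetilde{E}_{c}^{m}(\rho\Vert\sigma)\geq\widetilde{E}_{c}^{m}(\mathcal{N}(\rho)\Vert\mathcal{N}(\sigma)),
\end{equation}
and your primal route yields the same orientation: composing a dilution protocol for $(\rho,\sigma)$ with $\mathcal{N}$ dilutes the processed pair with no larger error, so processing can only decrease the error and hence increase $E_{c}^{m}$. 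A replacer channel makes this stark: for $\rho=|0\rangle\!\langle0|$, $\sigma=\pi_{2}$, $m=0$, the constraint $\widetilde{\rho}\leq\sigma$ forces $\widetilde{\rho}=\sigma$ and $E_{c}^{0}(\rho\Vert\sigma)=1$, while replacing both states by a fixed state allows exact dilution and $E_{c}^{0}=+\infty$, contradicting \eqref{eq:dp-dist-dilu-exp} as printed. The signs in the proposition appear to have been transcribed from the distillation case, where the SDP value inside the logarithm is a success probability rather than an error; the paper's own proof, read carefully, performs exactly the pull-back above and therefore establishes the reversed (correct) inequalities. To complete your argument, commit to the pull-back direction and state the conclusion with the corrected signs; the feasibility verifications you listed are otherwise complete.
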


\begin{proof}
We use the dual form of $E_{c}^{m}(\rho\Vert\sigma)$ in
\eqref{eq:dual-cost-err-exp}. Let $R$, $S$, and $\kappa$ be feasible choices
for $\mathcal{N}(\rho)$ and $\mathcal{N}(\sigma)$ in $E_{c}^{m}(\mathcal{N}%
(\rho)\Vert\mathcal{N}(\sigma))$. Then $\mathcal{N}^{\dag}(R)$, $\mathcal{N}%
^{\dag}(S)$, and $\kappa$ are feasible choices for $E_{c}^{m}(\rho\Vert
\sigma)$. By applying definitions, the inequality in
\eqref{eq:dp-dist-dilu-exp} follows. Applying
\eqref{eq:relate-err-exp-sc-exp-cost}, the inequality in \eqref{eq:dp-dist-dilu-sc-exp}\ follows.
\end{proof}

\subsection{Relating distinguishability dilution to R\'enyi relative
entropies}

Using the above, the strong converse exponent can then be written as%
\begin{align}
\widetilde{E}_{c}^{m}(\rho\Vert\sigma)  &  \coloneqq-\log_{2}\left[
1-\inf_{\widetilde{\rho}\in\mathcal{S},\widetilde{\rho}\leq2^{m}\sigma}%
\frac{1}{2}\left\Vert \widetilde{\rho}-\rho\right\Vert _{1}\right] \\
&  =\inf_{\widetilde{\rho}\in\mathcal{S},\widetilde{\rho}\leq2^{m}\sigma
}\left(  -\log_{2}\left[  1-\frac{1}{2}\left\Vert \widetilde{\rho}%
-\rho\right\Vert _{1}\right]  \right)  . \label{eq:alt-form-sc-exp-cost}%
\end{align}

\begin{proposition}
\label{prop:one-shot-lower-bnd-dist-dil}For states $\rho$ and $\sigma$ and
$m\geq0$, the following inequality holds%
\begin{multline}
\max\left\{  \sup_{\alpha\in\left(  0,1\right)  }\left(  \frac{\alpha-1}%
{2}\right)  \left[  m-D_{\alpha}(\rho\Vert\sigma)\right]  ,\sup_{\alpha
\in\left(  1/2,1\right)  }\left(  \frac{\alpha-1}{2\alpha}\right)  \left[
m-\widetilde{D}_{\alpha}(\rho\Vert\sigma)\right]  \right\} \\
\leq\widetilde{E}_{c}^{m}(\rho\Vert\sigma).
\end{multline}

\end{proposition}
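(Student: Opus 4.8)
The plan is to argue directly from the variational form \eqref{eq:alt-form-sc-exp-cost},
\[
\widetilde{E}_{c}^{m}(\rho\Vert\sigma)=\inf_{\widetilde{\rho}\in\mathcal{S},\ \widetilde{\rho}\leq2^{m}\sigma}\left(-\log_{2}\left[1-\tfrac{1}{2}\Vert\widetilde{\rho}-\rho\Vert_{1}\right]\right).
\]
Because $x\mapsto-\log_{2}(1-x)$ is increasing, it suffices to upper bound the overlap $1-\tfrac12\Vert\widetilde{\rho}-\rho\Vert_{1}$ by a $\widetilde{\rho}$-independent quantity, and the two entries of the maximum arise from two different ways of doing so. In both cases I first use \eqref{eq:aud-1} with $A=\rho$ and $B=\widetilde{\rho}$ to write $1-\tfrac12\Vert\widetilde{\rho}-\rho\Vert_{1}=\inf_{0\leq T\leq I}(\operatorname{Tr}[(I-T)\rho]+\operatorname{Tr}[T\widetilde{\rho}])$.

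For the Petz entry I would exploit the constraint $\widetilde{\rho}\leq2^{m}\sigma$ together with $T\geq0$ to pass to $\operatorname{Tr}[T\widetilde{\rho}]\leq2^{m}\operatorname{Tr}[T\sigma]$, which yields the uniform bound $1-\tfrac12\Vert\widetilde{\rho}-\rho\Vert_{1}\leq\inf_{0\leq T\leq I}(\operatorname{Tr}[(I-T)\rho]+2^{m}\operatorname{Tr}[T\sigma])$. Lemma~\ref{lemma:spectral-ineq}, in the form \eqref{eq:aud-1}--\eqref{eq:aud-2} with $A=\rho$ and $B=2^{m}\sigma$, then bounds the right-hand side by $\operatorname{Tr}[\rho^{\alpha}(2^{m}\sigma)^{1-\alpha}]=2^{(1-\alpha)(m-D_{\alpha}(\rho\Vert\sigma))}$ for each $\alpha\in(0,1)$. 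Taking $-\log_{2}$, this produces the lower bound $(\alpha-1)(m-D_{\alpha}(\rho\Vert\sigma))$; since the supremum of this expression over $\alpha\in(0,1)$ is at least that of $\tfrac{\alpha-1}{2}(m-D_{\alpha}(\rho\Vert\sigma))$, the first entry of the maximum follows. This is the cost-analogue of Proposition~\ref{prop:dist-err-exp-to-renyi}.

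For the sandwiched entry I would instead route through the fidelity. Fuchs--van de Graaf gives $1-\tfrac12\Vert\widetilde{\rho}-\rho\Vert_{1}\leq F(\rho,\widetilde{\rho})=2^{-\frac12\widetilde{D}_{1/2}(\rho\Vert\widetilde{\rho})}$, after which two standard monotonicity properties of $\widetilde{D}_{\alpha}$ do the work. First, monotonicity of $\widetilde{D}_{\alpha}$ in its second argument under the operator order, together with the scaling identity $\widetilde{D}_{\alpha}(\rho\Vert2^{m}\sigma)=\widetilde{D}_{\alpha}(\rho\Vert\sigma)-m$, turns $\widetilde{\rho}\leq2^{m}\sigma$ into $\widetilde{D}_{\alpha}(\rho\Vert\widetilde{\rho})\geq\widetilde{D}_{\alpha}(\rho\Vert\sigma)-m$. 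Second, the fact that $\alpha\mapsto\tfrac{\alpha-1}{\alpha}\widetilde{D}_{\alpha}$ is non-decreasing gives, for $\alpha\in(1/2,1)$, the estimate $\tfrac12\widetilde{D}_{1/2}(\rho\Vert\widetilde{\rho})\geq\tfrac{1-\alpha}{2\alpha}\widetilde{D}_{\alpha}(\rho\Vert\widetilde{\rho})$, and this is exactly where the coefficient $\tfrac{\alpha-1}{2\alpha}$ and the range $\alpha\in(1/2,1)$ originate. Chaining the two yields $1-\tfrac12\Vert\widetilde{\rho}-\rho\Vert_{1}\leq2^{\frac{1-\alpha}{2\alpha}(m-\widetilde{D}_{\alpha}(\rho\Vert\sigma))}$ uniformly over feasible $\widetilde{\rho}$; taking $-\log_{2}$, the infimum over $\widetilde{\rho}$, and finally the supremum over $\alpha\in(1/2,1)$ gives the second entry.

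The reductions via \eqref{eq:aud-1} and the exponent bookkeeping are routine. The delicate part is the sandwiched entry: one must use the correct monotonicity of $\widetilde{D}_{\alpha}$ in $\alpha$ to upgrade the fidelity ($\alpha=\tfrac12$) estimate to the whole family, and the monotonicity of $\widetilde{D}_{\alpha}$ under $\widetilde{\rho}\leq2^{m}\sigma$ in the second slot. It is worth stressing that a direct data-processing argument of the kind used in Proposition~\ref{prop:dist-sc-exp-to-renyi} does \emph{not} work here: for $\alpha<1$ data processing bounds the relevant R\'enyi quantity from the wrong side, so the fidelity route plus these two monotonicities is what makes the estimate go through.
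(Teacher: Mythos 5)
Your argument is correct in substance but assembles different ingredients than the paper. The paper proves both entries of the maximum from Lemma~3 of \cite{Wang2019states}, a three-state inequality comparing $D_{\alpha}(\rho_{1}\Vert\sigma)$ (resp.\ $\widetilde{D}_{\alpha}$) with $D_{2-\alpha}(\rho_{0}\Vert\sigma)$ (resp.\ $\widetilde{D}_{\alpha/(2\alpha-1)}$) in terms of trace distance (resp.\ fidelity), and then disposes of the $\beta$-term via $D_{\beta}(\widetilde{\rho}\Vert\sigma)\leq D_{\max}(\widetilde{\rho}\Vert\sigma)\leq m$. Your Petz entry instead applies Lemma~\ref{lemma:spectral-ineq} in the form \eqref{eq:aud-1}--\eqref{eq:aud-2} to $A=\rho$, $B=2^{m}\sigma$; this is word-for-word the paper's proof of Proposition~\ref{prop:one-shot-bnd} (the bound of \cite{SD22}), and it buys you the stronger coefficient $\alpha-1$ in place of $(\alpha-1)/2$ --- the paper itself notes in Section~\ref{sec:rec-devs} that this improves on the present proposition. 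Your sandwiched entry replaces the three-state lemma with two two-state facts about $\widetilde{D}_{\alpha}(\rho\Vert\widetilde{\rho})$: anti-monotonicity in the second argument under the operator order (plus the scaling identity), and monotonicity of $\alpha\mapsto\frac{\alpha-1}{\alpha}\widetilde{D}_{\alpha}$. Both are true and known; the latter follows, e.g., from the variational formula of \cite{MO13}, or from Lemma~3 of \cite{Wang2019states} itself by taking $\rho_{0}=\sigma=\widetilde{\rho}$, so this route is legitimate, though not more elementary than the paper's.

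Two points need tightening. First, the step asserting that the supremum of $(\alpha-1)(m-D_{\alpha}(\rho\Vert\sigma))$ over $\alpha\in(0,1)$ dominates that of $\frac{\alpha-1}{2}(m-D_{\alpha}(\rho\Vert\sigma))$ is stated without proof, and it is not a pointwise inequality: whenever $D_{\alpha}(\rho\Vert\sigma)<m$, the comparison between integrands runs the other way. It does hold at the level of suprema, because the second supremum equals half the first and the first is non-negative (since $D_{\alpha}\geq0$ gives $(\alpha-1)(m-D_{\alpha})\geq-(1-\alpha)m$, which tends to $0$ as $\alpha\rightarrow1^{-}$); alternatively, bypass the comparison by combining your bound with the trivial facts $\widetilde{E}_{c}^{m}(\rho\Vert\sigma)\geq0$ and $\max\{0,x\}\geq x/2$ for all real $x$. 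One of these one-line remarks must be added. Second, with the paper's conventions (see its use of \cite{FG98}), Fuchs--van de Graaf reads $\sqrt{F}(\rho,\widetilde{\rho})\geq1-\frac{1}{2}\Vert\rho-\widetilde{\rho}\Vert_{1}$, whereas the inequality $1-\frac{1}{2}\Vert\rho-\widetilde{\rho}\Vert_{1}\leq F(\rho,\widetilde{\rho})$ as you wrote it is false in general for mixed states. Since $2^{-\frac{1}{2}\widetilde{D}_{1/2}(\rho\Vert\widetilde{\rho})}=\sqrt{F}(\rho,\widetilde{\rho})$, your displayed chain is nevertheless correct; only the symbol $F$ should be read (or rewritten) as the root fidelity. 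Neither issue affects the architecture of the proof.
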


\begin{proof}
Recall the following inequality from \cite[Lemma~3]{Wang2019states}:%
\begin{equation}
D_{\beta}(\rho_{0}\Vert\sigma)-D_{\alpha}(\rho_{1}\Vert\sigma)\geq\frac
{2}{1-\alpha}\log_{2}\left[  1-\frac{1}{2}\left\Vert \rho_{0}-\rho
_{1}\right\Vert _{1}\right]  ,
\end{equation}
which holds for $\alpha\in(0,1)$ and $\beta=2-\alpha\in(1,2)$. We can rewrite
this as follows:%
\begin{equation}
\left(  \frac{1-\alpha}{2}\right)  \left[  D_{\alpha}(\rho_{1}\Vert
\sigma)-D_{\beta}(\rho_{0}\Vert\sigma)\right]  \leq-\log_{2}\left[  1-\frac
{1}{2}\left\Vert \rho_{0}-\rho_{1}\right\Vert _{1}\right]  .
\end{equation}
Fix $\alpha\in(0,1)$. Let $\widetilde{\rho}$ be an arbitrary state satisfying
$\widetilde{\rho}\leq2^{m}\sigma$. Then we find that%
\begin{equation}
\left(  \frac{1-\alpha}{2}\right)  \left[  D_{\alpha}(\rho\Vert\sigma
)-D_{\beta}(\widetilde{\rho}\Vert\sigma)\right]  \leq-\log_{2}\left[
1-\frac{1}{2}\left\Vert \widetilde{\rho}-\rho\right\Vert _{1}\right]  .
\end{equation}
Now consider that%
\begin{align}
D_{\beta}(\widetilde{\rho}\Vert\sigma)  &  =\frac{1}{\beta-1}\log
_{2}\operatorname{Tr}[\widetilde{\rho}^{\beta}\sigma^{1-\beta}]\\
&  \leq\frac{1}{\beta-1}\log_{2}\operatorname{Tr}[\widetilde{\rho}^{\beta
}\left(  2^{-m}\widetilde{\rho}\right)  ^{1-\beta}]\\
&  =m+\frac{1}{\beta-1}\log_{2}\operatorname{Tr}[\widetilde{\rho}^{\beta
}\widetilde{\rho}^{1-\beta}]\\
&  =m,
\end{align}
which follows from operator anti-monotonicity of $x^{1-\beta}$ for $\beta
\in(1,2)$. Equivalently, we could also simply use the fact that $D_{\beta
}(\widetilde{\rho}\Vert\sigma)\leq D_{\max}(\widetilde{\rho}\Vert\sigma)\leq
m$ \cite{BD10}. Substituting this above, we find that%
\begin{equation}
\left(  \frac{1-\alpha}{2}\right)  \left[  D_{\alpha}(\rho\Vert\sigma
)-m\right]  \leq-\log_{2}\left[  1-\frac{1}{2}\left\Vert \widetilde{\rho}%
-\rho\right\Vert _{1}\right]  .
\end{equation}
Since the inequality holds for all states $\widetilde{\rho}$ satisfying
$\widetilde{\rho}\leq2^{m}\sigma$, we conclude that%
\begin{equation}
\left(  \frac{1-\alpha}{2}\right)  \left[  D_{\alpha}(\rho\Vert\sigma
)-m\right]  \leq\widetilde{E}_{c}^{m}(\rho\Vert\sigma).
\end{equation}
Since the inequality holds for all $\alpha\in(0,1)$, we conclude the first
statement of the proposition.

Recall that \cite[Lemma~3]{Wang2019states}%
\begin{equation}
\widetilde{D}_{\beta}(\rho_{0}\Vert\sigma)-\widetilde{D}_{\alpha}(\rho
_{1}\Vert\sigma)\geq\frac{\alpha}{1-\alpha}\log_{2}F(\rho_{0},\rho_{1}),
\end{equation}
for $\alpha\in\left(  1/2,1\right)  $ and $\beta(\alpha)\coloneqq\alpha
/(2\alpha-1)\in(1,\infty)$. We can rewrite this as%
\begin{align}
&  \left(  \frac{1-\alpha}{2\alpha}\right)  \left[  \widetilde{D}_{\alpha
}(\rho_{1}\Vert\sigma)-\widetilde{D}_{\beta}(\rho_{0}\Vert\sigma)\right]
\nonumber\\
&  \leq-\frac{1}{2}\log_{2}F(\rho_{0},\rho_{1})\\
&  =-\log_{2}\sqrt{F}(\rho_{0},\rho_{1})\\
&  \leq-\log_{2}\left[  1-\frac{1}{2}\left\Vert \rho_{0}-\rho_{1}\right\Vert
_{1}\right]  ,
\end{align}
where we made use of the inequality $\sqrt{F}(\rho_{0},\rho_{1})\geq1-\frac
{1}{2}\left\Vert \rho_{0}-\rho_{1}\right\Vert _{1}$ \cite{FG98}. Fix
$\alpha\in\left(  1/2,1\right)  $. Let $\widetilde{\rho}$ be an arbitrary
state satisfying $\widetilde{\rho}\leq2^{m}\sigma$. Then we find that%
\begin{equation}
\left(  \frac{1-\alpha}{2\alpha}\right)  \left[  \widetilde{D}_{\alpha}%
(\rho\Vert\sigma)-\widetilde{D}_{\beta}(\widetilde{\rho}\Vert\sigma)\right]
\leq-\log_{2}\left[  1-\frac{1}{2}\left\Vert \widetilde{\rho}-\rho\right\Vert
_{1}\right]  .
\end{equation}
Now consider that \cite{muller2013quantum}%
\begin{equation}
\widetilde{D}_{\beta}(\widetilde{\rho}\Vert\sigma)\leq D_{\max}(\widetilde
{\rho}\Vert\sigma)\leq m.
\end{equation}
This implies that%
\begin{equation}
\left(  \frac{1-\alpha}{2\alpha}\right)  \left[  \widetilde{D}_{\alpha}%
(\rho\Vert\sigma)-m\right]  \leq-\log_{2}\left[  1-\frac{1}{2}\left\Vert
\widetilde{\rho}-\rho\right\Vert _{1}\right]  .
\end{equation}
Since the inequality holds for all states $\widetilde{\rho}$ satisfying
$\widetilde{\rho}\leq2^{m}\sigma$, we conclude that%
\begin{equation}
\left(  \frac{1-\alpha}{2\alpha}\right)  \left[  \widetilde{D}_{\alpha}%
(\rho\Vert\sigma)-m\right]  \leq\widetilde{E}_{c}^{m}(\rho\Vert\sigma).
\end{equation}
Since the inequality holds for all $\alpha\in\left(  1/2,1\right)  $, we
conclude the second statement of the proposition.
\end{proof}

\begin{corollary}
\label{cor:sc-exp-dd-cost} Let $\rho$ and $\sigma$ be states. Then the
following lower bound holds for the asymptotic strong converse exponent of
distinguishability dilution:%
\begin{multline}
\max\left\{  \sup_{\alpha\in\left(  0,1\right)  }\left(  \frac{\alpha-1}%
{2}\right)  \left[  R-D_{\alpha}(\rho\Vert\sigma)\right]  ,\sup_{\alpha
\in\left(  1/2,1\right)  }\left(  \frac{1-\alpha}{2\alpha}\right)  \left[
R-\widetilde{D}_{\alpha}(\rho\Vert\sigma)\right]  \right\}
\label{eq:lower-bnd-sc-exp-dist-dil}\\
\leq\lim_{n\rightarrow\infty}\frac{1}{n}\widetilde{E}_{c}^{nR}(\rho^{\otimes
n}\Vert\sigma^{\otimes n}).
\end{multline}

\end{corollary}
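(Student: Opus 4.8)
The plan is to obtain this bound as an immediate asymptotic consequence of the one-shot lower bound established in Proposition~\ref{prop:one-shot-lower-bnd-dist-dil}, evaluated on tensor-power states. The essential structural fact I would exploit is that every term appearing on the left-hand side of that one-shot bound is a R\'enyi relative entropy, and both the Petz and the sandwiched families are additive under tensor products; consequently, after dividing by $n$ the bound becomes completely independent of $n$ and therefore descends to the limit at no cost.

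First I would set $\rho\mapsto\rho^{\otimes n}$, $\sigma\mapsto\sigma^{\otimes n}$, and $m\mapsto nR$ in Proposition~\ref{prop:one-shot-lower-bnd-dist-dil}. For the Petz term this gives, for each fixed $\alpha\in(0,1)$,
\begin{equation}
\left(\frac{\alpha-1}{2}\right)\left[nR-D_{\alpha}(\rho^{\otimes n}\Vert\sigma^{\otimes n})\right]\leq\widetilde{E}_{c}^{nR}(\rho^{\otimes n}\Vert\sigma^{\otimes n}),
\end{equation}
and an entirely analogous inequality holds for the sandwiched term with $\alpha\in(1/2,1)$. Next I would invoke additivity, namely
\begin{equation}
D_{\alpha}(\rho^{\otimes n}\Vert\sigma^{\otimes n})=nD_{\alpha}(\rho\Vert\sigma),\qquad\widetilde{D}_{\alpha}(\rho^{\otimes n}\Vert\sigma^{\otimes n})=n\widetilde{D}_{\alpha}(\rho\Vert\sigma),
\end{equation}
both of which follow from multiplicativity of $\operatorname{Tr}[\rho^{\alpha}\sigma^{1-\alpha}]$ and of its sandwiched analogue under tensor products. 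Dividing through by $n$ then yields, for every $n$,
\begin{equation}
\left(\frac{\alpha-1}{2}\right)\left[R-D_{\alpha}(\rho\Vert\sigma)\right]\leq\frac{1}{n}\widetilde{E}_{c}^{nR}(\rho^{\otimes n}\Vert\sigma^{\otimes n}),
\end{equation}
together with its sandwiched counterpart. Since the left-hand side no longer depends on $n$, I would pass to the limit over $n$, then optimize over $\alpha$ in each range and take the maximum of the two resulting bounds, arriving at the statement.

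I do not expect any genuine obstacle here: the argument is mechanical once the one-shot bound of Proposition~\ref{prop:one-shot-lower-bnd-dist-dil} and the additivity of the R\'enyi relative entropies are available. The only point deserving care is the passage to the limit. Because the one-shot inequality is proved for each $n$ separately, it is the normalized exponent $\tfrac{1}{n}\widetilde{E}_{c}^{nR}(\rho^{\otimes n}\Vert\sigma^{\otimes n})$ that is bounded below by an $n$-independent constant; a uniform lower bound on a sequence bounds its $\liminf$ from below, so the inequality survives the limit and the result holds even without separately assuming that the limit exists.
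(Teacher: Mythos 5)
Your proposal is correct and takes exactly the paper's route: the paper's entire proof of this corollary is the one-line remark that it is "a direct application of Proposition~\ref{prop:one-shot-lower-bnd-dist-dil}," and your substitution $\rho\mapsto\rho^{\otimes n}$, $\sigma\mapsto\sigma^{\otimes n}$, $m\mapsto nR$, followed by additivity of the R\'enyi quantities and passage to the (lim inf of the) normalized exponent, is precisely the content of that line. One caveat: if you actually carry out your "entirely analogous" sandwiched step, the Proposition delivers the prefactor $\left(\frac{\alpha-1}{2\alpha}\right)$, not the $\left(\frac{1-\alpha}{2\alpha}\right)$ printed in the Corollary; the printed sign appears to be a typo in the paper (as literally stated the bound would fail for $R>D(\rho\Vert\sigma)$, where the right-hand side is $0$ while the printed sandwiched term is strictly positive), so what your argument proves is the sign-consistent, intended statement rather than the one as typeset.
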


\begin{proof}
This follows from definitions and a direct application of
Proposition~\ref{prop:one-shot-lower-bnd-dist-dil}.
\end{proof}

\begin{proposition}
\label{prop:err-exp-cost-to-renyi}Let $\rho$ and $\sigma$ be states and
$m\geq0$. Then the following bound holds for all $\alpha>1$:%
\begin{equation}
\left(  \frac{\alpha-1}{2}\right)  \left[  m-\widetilde{D}_{\alpha}(\rho
\Vert\sigma)\right]  \leq E_{c}^{m}(\rho\Vert\sigma)+\left(  \frac{\alpha
-1}{2}\right)  \log_{2}\left(  \frac{1}{1-2^{-2E_{c}^{m}(\rho\Vert\sigma)}%
}\right)
\end{equation}

\end{proposition}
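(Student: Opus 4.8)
The plan is to convert the claimed inequality into a lower bound on the sandwiched R\'enyi quantity of $\rho$ relative to $2^m\sigma$ and then to prove that bound by a single data-processing step under a carefully chosen measurement. Write $\varepsilon\coloneqq 2^{-E_{c}^{m}(\rho\Vert\sigma)}$, so that by Proposition~\ref{prop:dist-dil-sdp} one has $\varepsilon=\inf_{\widetilde{\rho}\in\mathcal{S},\ \widetilde{\rho}\leq 2^{m}\sigma}\frac{1}{2}\left\Vert \widetilde{\rho}-\rho\right\Vert _{1}$. Using the scaling identity $\widetilde{D}_{\alpha}(\rho\Vert 2^{m}\sigma)=\widetilde{D}_{\alpha}(\rho\Vert\sigma)-m$ and substituting $E_{c}^{m}=-\log_{2}\varepsilon$ and $2^{-2E_{c}^{m}}=\varepsilon^{2}$, a routine rearrangement shows that the proposition is equivalent to
\begin{equation}
2^{(\alpha-1)\left[\widetilde{D}_{\alpha}(\rho\Vert\sigma)-m\right]}=\operatorname{Tr}\!\left[\left((2^{m}\sigma)^{(1-\alpha)/2\alpha}\rho\,(2^{m}\sigma)^{(1-\alpha)/2\alpha}\right)^{\alpha}\right]\ \geq\ \varepsilon^{2}\left(1-\varepsilon^{2}\right)^{\alpha-1}.
\end{equation}
Thus it suffices to lower bound the left-hand quantity, i.e.\ $\widetilde{D}_{\alpha}(\rho\Vert 2^{m}\sigma)$.

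First I would record that the optimal dilution error is a hockey-stick divergence, $\varepsilon=\operatorname{Tr}[(\rho-2^{m}\sigma)_{+}]$. The bound $\varepsilon\geq\operatorname{Tr}[(\rho-2^{m}\sigma)_{+}]$ is immediate, since for the test $T\coloneqq\left\{\rho>2^{m}\sigma\right\}$ (the spectral projection onto the positive part of $\rho-2^{m}\sigma$) and any feasible $\widetilde{\rho}$ one has $\frac{1}{2}\left\Vert \widetilde{\rho}-\rho\right\Vert _{1}\geq\operatorname{Tr}[T(\rho-\widetilde{\rho})]\geq\operatorname{Tr}[T\rho]-\operatorname{Tr}[T\,2^{m}\sigma]=\operatorname{Tr}[(\rho-2^{m}\sigma)_{+}]$; the matching construction caps $\rho$ at $2^{m}\sigma$ and re-normalizes, which is feasible because $\operatorname{Tr}[2^{m}\sigma]\geq 1$. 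Setting $a\coloneqq\operatorname{Tr}[T\rho]$, this identity pins down $\operatorname{Tr}[T\,2^{m}\sigma]=a-\varepsilon$ exactly.

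Next I would apply the data-processing inequality for $\widetilde{D}_{\alpha}$ at $\alpha>1$ under the binary measurement channel $\{T,I-T\}$ and discard the (nonnegative) second outcome, which gives
\begin{equation}
2^{(\alpha-1)\left[\widetilde{D}_{\alpha}(\rho\Vert\sigma)-m\right]}\ \geq\ \operatorname{Tr}[T\rho]^{\alpha}\,\operatorname{Tr}[T\,2^{m}\sigma]^{1-\alpha}\ =\ a^{\alpha}(a-\varepsilon)^{1-\alpha}.
\end{equation}
It then remains to verify the elementary scalar inequality $a^{\alpha}(a-\varepsilon)^{1-\alpha}\geq\varepsilon^{2}(1-\varepsilon^{2})^{\alpha-1}$ for all $a\in(\varepsilon,1]$. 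Minimizing the left side over $a$ (the stationary point is $a=\alpha\varepsilon$, and the minimum is $\tfrac{\alpha^{\alpha}}{(\alpha-1)^{\alpha-1}}\varepsilon$ when $\alpha\varepsilon\leq 1$, and $(1-\varepsilon)^{1-\alpha}$ attained at $a=1$ otherwise) shows the minimum is at least $\varepsilon$ in either case, so $a^{\alpha}(a-\varepsilon)^{1-\alpha}\geq\varepsilon\geq\varepsilon^{2}\geq\varepsilon^{2}(1-\varepsilon^{2})^{\alpha-1}$, completing the argument.

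The main obstacle is conceptual, in the choice of measurement. The tempting move is to test $\rho$ against the \emph{optimal feasible state} $\widetilde{\rho}^{*}$, but for $\alpha>1$ the constraint $\widetilde{\rho}^{*}\leq 2^{m}\sigma$ only yields upper bounds on the R\'enyi expressions of $\widetilde{\rho}^{*}$ (equivalently, monotonicity of $\widetilde{D}_{\alpha}$ in its second argument runs the wrong way), and so cannot deliver the desired \emph{lower} bound on $\widetilde{D}_{\alpha}(\rho\Vert\sigma)$. The key realization is that one must instead test $\rho$ directly against $2^{m}\sigma$ via $T=\left\{\rho>2^{m}\sigma\right\}$, for which $\operatorname{Tr}[T\,2^{m}\sigma]$ is determined \emph{exactly} by the hockey-stick identity rather than merely bounded. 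The remaining genuine work is then modest: justifying that hockey-stick characterization of $\varepsilon$ in the non-commuting case (where the capping of $\rho$ at $2^{m}\sigma$ must be phrased through operator positive and negative parts), after which the data-processing step and the scalar estimate are routine.
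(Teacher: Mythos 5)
Your reduction of the proposition to the inequality $2^{(\alpha-1)[\widetilde{D}_{\alpha}(\rho\Vert\sigma)-m]}\geq\varepsilon^{2}(1-\varepsilon^{2})^{\alpha-1}$, the data-processing step, and the final scalar minimization are all correct, as is your proof of the direction $\varepsilon\geq\operatorname{Tr}[(\rho-2^{m}\sigma)_{+}]$. The fatal gap is the other direction of your ``hockey-stick identity'': in the quantum case it is \emph{false} that $\varepsilon=\operatorname{Tr}[(\rho-2^{m}\sigma)_{+}]$. Counterexample: $\rho=|+\rangle\!\langle+|$, $\sigma=|0\rangle\!\langle0|$, $m=1$. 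The constraints $0\leq\widetilde{\rho}\leq2|0\rangle\!\langle0|$ and $\operatorname{Tr}[\widetilde{\rho}]=1$ force $\widetilde{\rho}=|0\rangle\!\langle0|$, so that $\varepsilon=\frac{1}{2}\Vert\,|0\rangle\!\langle0|-|+\rangle\!\langle+|\,\Vert_{1}=1/\sqrt{2}\approx0.707$, whereas $\operatorname{Tr}[(\rho-2\sigma)_{+}]=(\sqrt{5}-1)/2\approx0.618$. The strict gap survives replacing $\sigma$ by the full-rank state $(1-\delta)|0\rangle\!\langle0|+\delta|1\rangle\!\langle1|$ for small $\delta>0$ (the feasible set still collapses toward $|0\rangle\!\langle0|$ while the hockey-stick value moves continuously), so the failure is not an artifact of support incompatibility, and in the perturbed example the proposition itself is non-vacuous. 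This is exactly the issue you flagged but dismissed as ``modest'': $\rho-(\rho-2^{m}\sigma)_{+}$ need not be positive semi-definite, and the counterexample shows that no feasible state attains the hockey-stick value, so the ``cap and renormalize'' construction cannot be repaired in general.

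The false direction is load-bearing in your argument. The true inequality only gives $\operatorname{Tr}[T\,2^{m}\sigma]\geq a-\varepsilon$, and since the exponent $1-\alpha$ is negative, data processing then yields $2^{(\alpha-1)[\widetilde{D}_{\alpha}(\rho\Vert\sigma)-m]}\geq a^{\alpha}\operatorname{Tr}[T\,2^{m}\sigma]^{1-\alpha}$ with $a^{\alpha}\operatorname{Tr}[T\,2^{m}\sigma]^{1-\alpha}\leq a^{\alpha}(a-\varepsilon)^{1-\alpha}$, i.e., something \emph{weaker} than the bound you need. What your method genuinely proves is $2^{(\alpha-1)[\widetilde{D}_{\alpha}(\rho\Vert\sigma)-m]}\geq\varepsilon_{0}$ with $\varepsilon_{0}\coloneqq\operatorname{Tr}[(\rho-2^{m}\sigma)_{+}]\leq\varepsilon$, which suffices only if one can additionally show a reverse relation of the form $\varepsilon_{0}\gtrsim\varepsilon^{2}$; such a quadratic-loss comparison between trace-distance smoothing and the hockey-stick divergence is a genuine quantum smoothing lemma, not a routine step, and it is essentially the content of what the paper invokes instead. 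Indeed, the paper's proof combines the known bound $D_{\max}^{\varepsilon}(\rho\Vert\sigma)\leq\widetilde{D}_{\alpha}(\rho\Vert\sigma)+\frac{1}{\alpha-1}\log_{2}(1/\varepsilon^{2})+\log_{2}(1/(1-\varepsilon^{2}))$ of \cite{Wang2019states,Buscemi2019information} with Proposition~\ref{prop:smooth-dmax-to-err-exp-cost}. Note finally that the additive term $\left(\frac{\alpha-1}{2}\right)\log_{2}\!\left(\frac{1}{1-2^{-2E_{c}^{m}(\rho\Vert\sigma)}}\right)$ in the proposition, which your identity would essentially erase, is precisely the penalty incurred by quantum (as opposed to classical) smoothing; that your route produces a strictly stronger, penalty-free bound should itself have been a warning sign.
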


\begin{proof}
Consider from \cite[Proposition~6]{Wang2019states} and \cite[Proposition~2.2]%
{Buscemi2019information} that the following inequality holds for $\alpha>1$
and $\varepsilon\in(0,1)$:%
\begin{equation}
D_{\max}^{\varepsilon}(\rho\Vert\sigma)\leq\widetilde{D}_{\alpha}(\rho
\Vert\sigma)+\frac{1}{\alpha-1}\log_{2}\left(  \frac{1}{\varepsilon^{2}%
}\right)  +\log_{2}\left(  \frac{1}{1-\varepsilon^{2}}\right)  .
\end{equation}
Setting $m=D_{\max}^{\varepsilon}(\rho\Vert\sigma)$ so that $E_{c}^{m}%
(\rho\Vert\sigma)=-\log_{2}(\varepsilon)$ by
Proposition~\ref{prop:smooth-dmax-to-err-exp-cost}, and%
\begin{equation}
\log_{2}\left(  \frac{1}{1-\varepsilon^{2}}\right)  =\log_{2}\left(  \frac
{1}{1-2^{-2E_{c}^{m}(\rho\Vert\sigma)}}\right)
\end{equation}
we find that%
\begin{equation}
m\leq\widetilde{D}_{\alpha}(\rho\Vert\sigma)+\frac{2}{\alpha-1}E_{c}^{m}%
(\rho\Vert\sigma)+\log_{2}\left(  \frac{1}{1-2^{-2E_{c}^{m}(\rho\Vert\sigma)}%
}\right)  ,
\end{equation}
which implies that%
\begin{equation}
\left(  \frac{\alpha-1}{2}\right)  \left[  m-\widetilde{D}_{\alpha}(\rho
\Vert\sigma)\right]  \leq E_{c}^{m}(\rho\Vert\sigma)+\left(  \frac{\alpha
-1}{2}\right)  \log_{2}\left(  \frac{1}{1-2^{-2E_{c}^{m}(\rho\Vert\sigma)}%
}\right)  .
\end{equation}
This concludes the proof.
\end{proof}

\subsection{Relationship to smooth max-relative entropy}

\begin{proposition}
\label{prop:smooth-dmax-to-err-exp-cost}Let $\rho$ and $\sigma$ be states and
let $\varepsilon\in(0,1)$. Then%
\begin{equation}
D_{\max}^{\varepsilon}(\rho\Vert\sigma)=m\qquad\Longleftrightarrow\qquad
E_{c}^{m}(\rho\Vert\sigma)=-\log_{2}\varepsilon.
\end{equation}

\end{proposition}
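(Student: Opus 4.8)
The plan is to mirror the proof of Proposition~\ref{prop:smooth-dmin-to-dist-err-exp}, exploiting the fact that both $D_{\max}^{\varepsilon}(\rho\Vert\sigma)$ and $E_{c}^{m}(\rho\Vert\sigma)$ are built from the same smoothed state $\widetilde{\rho}$ and the same constraint $\widetilde{\rho}\leq 2^{m}\sigma$. I will prove the two inequalities $E_{c}^{m}(\rho\Vert\sigma)\geq-\log_{2}\varepsilon$ and $E_{c}^{m}(\rho\Vert\sigma)\leq-\log_{2}\varepsilon$ under the hypothesis $D_{\max}^{\varepsilon}(\rho\Vert\sigma)=m$, and then obtain the reverse implication by the same reasoning. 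Throughout I will use the compact form $2^{-E_{c}^{m}(\rho\Vert\sigma)}=\inf_{\widetilde{\rho}\in\mathcal{S},\,\widetilde{\rho}\leq 2^{m}\sigma}\tfrac{1}{2}\Vert\widetilde{\rho}-\rho\Vert_{1}$ established in Proposition~\ref{prop:dist-dil-sdp}.

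For the bound $E_{c}^{m}(\rho\Vert\sigma)\geq-\log_{2}\varepsilon$, I take a primal-optimal smoothed state $\widetilde{\rho}^{\ast}$ for $D_{\max}^{\varepsilon}(\rho\Vert\sigma)=m$ in \eqref{eq:smooth-dmax-SDP}, so that $\widetilde{\rho}^{\ast}$ is a state with $\widetilde{\rho}^{\ast}\leq 2^{m}\sigma$ and $\tfrac{1}{2}\Vert\rho-\widetilde{\rho}^{\ast}\Vert_{1}\leq\varepsilon$. Since $\widetilde{\rho}^{\ast}$ is feasible in the optimization defining $2^{-E_{c}^{m}(\rho\Vert\sigma)}$, we get $2^{-E_{c}^{m}(\rho\Vert\sigma)}\leq\tfrac{1}{2}\Vert\widetilde{\rho}^{\ast}-\rho\Vert_{1}\leq\varepsilon$, i.e. $E_{c}^{m}(\rho\Vert\sigma)\geq-\log_{2}\varepsilon$. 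This is the exact analogue of the primal feasibility step in Proposition~\ref{prop:smooth-dmin-to-dist-err-exp}.

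For the reverse bound $E_{c}^{m}(\rho\Vert\sigma)\leq-\log_{2}\varepsilon$, equivalently $2^{-E_{c}^{m}(\rho\Vert\sigma)}\geq\varepsilon$, I argue by contradiction from the minimality of $\lambda=2^{m}$. Suppose some state $\widetilde{\rho}'\leq 2^{m}\sigma$ achieved error $\delta:=\tfrac{1}{2}\Vert\widetilde{\rho}'-\rho\Vert_{1}<\varepsilon$. For $m>0$ and small $t>0$, the convex combination $\widetilde{\rho}''=(1-t)\widetilde{\rho}'+t\sigma$ is a state, obeys $\widetilde{\rho}''\leq\left((1-t)2^{m}+t\right)\sigma$ with $(1-t)2^{m}+t<2^{m}$, and by the triangle inequality satisfies $\tfrac{1}{2}\Vert\widetilde{\rho}''-\rho\Vert_{1}\leq(1-t)\delta+t\leq\varepsilon$ whenever $t\leq(\varepsilon-\delta)/(1-\delta)$. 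This produces a point feasible for $D_{\max}^{\varepsilon}$ with parameter strictly below $2^{m}$, contradicting $D_{\max}^{\varepsilon}(\rho\Vert\sigma)=m$; hence no such $\widetilde{\rho}'$ exists. In closer parallel to Proposition~\ref{prop:smooth-dmin-to-dist-err-exp}, one can alternatively take a dual-optimal tuple $(t,X,Q,\mu)$ for $D_{\max}^{\varepsilon}$, for which $2^{m}=\operatorname{Tr}[Q\rho]+\mu-\varepsilon t$ with $\operatorname{Tr}[X\sigma]=1$, and rescale it into a triple $(\kappa,R,S)$ that is feasible for the dual \eqref{eq:dual-cost-err-exp} and whose objective equals $\varepsilon$; pinning down this substitution is the counterpart of the choice $\lambda=1/\mu$, $W=X/\mu$ used before.

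Combining the two bounds gives $E_{c}^{m}(\rho\Vert\sigma)=-\log_{2}\varepsilon$, and the reverse implication follows by running the same two steps in the opposite order. I expect the main obstacle to be precisely this reverse implication: establishing $D_{\max}^{\varepsilon}(\rho\Vert\sigma)\geq m$ from $E_{c}^{m}(\rho\Vert\sigma)=-\log_{2}\varepsilon$ requires that the monotone map $\lambda\mapsto\inf_{\widetilde{\rho}\in\mathcal{S},\,\widetilde{\rho}\leq\lambda\sigma}\tfrac{1}{2}\Vert\widetilde{\rho}-\rho\Vert_{1}$ be \emph{strictly} decreasing at the relevant value, so that no plateau at height $\varepsilon$ can occur; the perturbation above supplies this for $m>0$, but the degenerate boundary (for instance $m=0$, where $\widetilde{\rho}\leq\sigma$ forces $\widetilde{\rho}=\sigma$, or the trivial case $\rho=\sigma$) must be handled or excluded separately, since there the equivalence can degenerate.
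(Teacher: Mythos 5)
Your forward direction ($D_{\max}^{\varepsilon}(\rho\Vert\sigma)=m\Rightarrow E_{c}^{m}(\rho\Vert\sigma)=-\log_{2}\varepsilon$) is correct for $m>0$, and its second half takes a genuinely different route from the paper. Where you rule out any state $\widetilde{\rho}'\leq2^{m}\sigma$ with error $\delta<\varepsilon$ by mixing with $\sigma$ to produce a point feasible for $D_{\max}^{\varepsilon}$ at parameter $(1-t)2^{m}+t<2^{m}$, the paper instead transfers dual certificates: it takes a dual-optimal tuple $(t,X,Q,\mu)$ for $D_{\max}^{\varepsilon}$ and rescales it to $(\kappa,R,S)=(1+\mu/t,\ I-Q/t,\ X/t)$, which is feasible for \eqref{eq:dual-cost-err-exp} with objective exactly $\varepsilon$. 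Your argument is more elementary (no SDP duality needed); the paper's produces explicit certificates and runs structurally parallel to Proposition~\ref{prop:smooth-dmin-to-dist-err-exp}. Both silently exclude $m=0$ (your strict inequality $(1-t)2^{m}+t<2^{m}$ and the paper's division by $t$ both fail there), and rightly so, since for $m=0$ the statement is false whenever $\tfrac{1}{2}\Vert\rho-\sigma\Vert_{1}<\varepsilon$.

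The genuine gap is in the reverse implication, specifically your claim that \emph{the perturbation above supplies} the needed strict decrease of $f(\lambda):=\inf_{\widetilde{\rho}\in\mathcal{S},\,\widetilde{\rho}\leq\lambda\sigma}\tfrac{1}{2}\Vert\widetilde{\rho}-\rho\Vert_{1}$ once $m>0$. Your perturbation trades error slack for a smaller $\lambda$: it shows $f((1-t)2^{m}+t)\leq f(2^{m})+t(1-f(2^{m}))$, i.e., it bounds $f$ at a \emph{smaller} argument by $f$ at a larger one. Ruling out a plateau requires the opposite direction—that $\lambda$-headroom ($f(\lambda')\leq\varepsilon$ for some $\lambda'<2^{m}$) forces $f(2^{m})<\varepsilon$—and mixing with $\sigma$ cannot show this. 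Indeed the claim is false even with $m>0$ and $\rho\neq\sigma$: take $\rho=\operatorname{diag}(1/2,0,1/2)$, $\sigma=\operatorname{diag}(1/2,1/2,0)$, $\varepsilon=1/2$, $m=1$. Every state $\widetilde{\rho}\leq\lambda\sigma$ is supported on $\operatorname{supp}(\sigma)$ and so has trace distance at least $1/2$ from $\rho$, with equality achievable already at $\lambda=1$; hence $f(\lambda)=1/2$ for all $\lambda\geq1$, giving $E_{c}^{1}(\rho\Vert\sigma)=1=-\log_{2}\varepsilon$ while $D_{\max}^{1/2}(\rho\Vert\sigma)=0\neq1$. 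The true obstruction is a plateau of $f$ at its asymptotic minimum, which sits at a positive height exactly when $\operatorname{supp}(\rho)\not\subseteq\operatorname{supp}(\sigma)$; since $f$ is convex and non-increasing in $\lambda$, plateaus can occur only at that minimum, so the reverse implication is salvageable under the extra hypothesis $\operatorname{supp}(\rho)\subseteq\operatorname{supp}(\sigma)$, or when $\varepsilon$ strictly exceeds that minimal error. In fairness, the paper's own reverse step divides by $\operatorname{Tr}[S\sigma]$, which vanishes at every dual optimizer in this same example, so the proposition as stated shares the defect; but your specific assertion that $m>0$ together with $\rho\neq\sigma$ suffices is what fails, and it is the one piece of the equivalence that neither your perturbation nor a reordering of your two bounds can deliver.
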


\begin{proof}
Suppose that $D_{\max}^{\varepsilon}(\rho\Vert\sigma)=m$. By applying
\eqref{eq:smooth-dmax-SDP}, we conclude that there exists a state
$\widetilde{\rho}$ and $\lambda\geq0$ satisfying $\widetilde{\rho}\leq
\lambda\sigma$ and an operator $Z\geq0$ satisfying $Z\geq\widetilde{\rho}%
-\rho$ and $\operatorname{Tr}[Z]\leq\varepsilon$, such that $2^{m}=\lambda$.
These choices are then feasible for $E_{c}^{m}(\rho\Vert\sigma)$ as given in
\eqref{eq:primal-cost-err-exp}, and so we conclude that%
\begin{equation}
E_{c}^{m}(\rho\Vert\sigma)\geq-\log_{2}\varepsilon.
\label{eq:err-exp-cost-lower-pf-1}%
\end{equation}

Let $t$, $X$, $Q\geq0$ and $\mu\in\mathbb{R}$ be optimal for the dual
formulation of $D_{\max}^{\varepsilon}(\rho\Vert\sigma)$, so that they satisfy%
\begin{equation}
\operatorname{Tr}[X\sigma]=1,\qquad Q\leq tI,\qquad Q+\mu I\leq X,
\end{equation}
as well as $2^{m}=\operatorname{Tr}[Q\rho]+\mu-\varepsilon t$. Consider that
the objective function of $E_{c}^{m}(\rho\Vert\sigma)$ is%
\begin{equation}
\kappa-\operatorname{Tr}[R\rho]-2^{m}\operatorname{Tr}[S\sigma]
\end{equation}
for which the following constraints hold%
\begin{equation}
\kappa,R,S\geq0,\qquad R\leq I,\qquad\kappa I\leq R+S.
\end{equation}
Let us pick $S=X/t$, $R=I-Q/t$, and $\kappa=1+\mu/t$, and it follows that the
needed constraints hold. We then find that the objective function evaluates to%
\begin{align}
&  \kappa-\operatorname{Tr}[R\rho]-2^{m}\operatorname{Tr}[S\sigma]\nonumber\\
&  =1+\mu/t-\operatorname{Tr}[\left(  I-Q/t\right)  \rho]-\left(
\operatorname{Tr}[Q\rho]+\mu-\varepsilon t\right)  \operatorname{Tr}%
[X\sigma]/t\\
&  =1+\mu/t-1+\operatorname{Tr}[Q\rho]/t-\left(  \operatorname{Tr}[Q\rho
]+\mu-\varepsilon t\right)  /t\\
&  =1+\mu/t-1+\operatorname{Tr}[Q\rho]/t-\left(  \operatorname{Tr}%
[Q\rho]/t+\mu/t-\varepsilon\right) \\
&  =\varepsilon.
\end{align}
By applying definitions, it follows that%
\begin{equation}
E_{c}^{m}(\rho\Vert\sigma)\leq-\log_{2}\varepsilon.
\label{eq:err-exp-cost-upper-pf-2}%
\end{equation}
Combining \eqref{eq:err-exp-cost-lower-pf-1} and
\eqref{eq:err-exp-cost-upper-pf-2}, we conclude that $E_{c}^{m}(\rho
\Vert\sigma)=-\log_{2}\varepsilon$.

We can show the opposite implication by inverting the choices above. Starting
from the optimal choices in the dual of $E_{c}^{m}(\rho\Vert\sigma)$, choose
$X=St$, $t=1/\operatorname{Tr}[S\sigma]$, $Q=\left(  I-R\right)  t$,
$\mu=\left(  \kappa-1\right)  t$. Then we find that the constraints for the
dual of $D_{\max}^{\varepsilon}(\rho\Vert\sigma)$ are satisfied and that
$\operatorname{Tr}[Q\rho]+\mu-\varepsilon t=2^{m}$. It then follows that
$D_{\max}^{\varepsilon}(\rho\Vert\sigma)\geq m$. Similarly, from the optimal
choices of the primal of $E_{c}^{m}(\rho\Vert\sigma)$, we find that $D_{\max
}^{\varepsilon}(\rho\Vert\sigma)\leq m$. So we conclude the other implication.
\end{proof}

\section{Relating distinguishability distillation and dilution exponents}

\begin{proposition}
\label{prop:bounds-taken-from-smooth-dmin-dmax}Let $\rho$ and $\sigma$ be
states and let $k,m\geq0$. Then the following inequalities hold%
\begin{align}
-\log_{2}\left(  2^{-E_{c}^{k}(\rho\Vert\sigma)}+2^{k-m}\right)   &
\leq\widetilde{E}_{d}^{m}(\rho\Vert\sigma),\\
-\log_{2}\left(  2^{-E_{d}^{m}(\rho\Vert\sigma)}+2^{k-m}\right)   &
\leq\widetilde{E}_{c}^{k}(\rho\Vert\sigma).
\end{align}

\end{proposition}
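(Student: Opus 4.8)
The plan is to prove the two inequalities in their equivalent form obtained by applying the order-reversing map $-\log_2(\cdot)$, namely
\begin{align}
2^{-\widetilde{E}_{d}^{m}(\rho\Vert\sigma)} &\leq 2^{-E_{c}^{k}(\rho\Vert\sigma)}+2^{k-m},\\
2^{-\widetilde{E}_{c}^{k}(\rho\Vert\sigma)} &\leq 2^{-E_{d}^{m}(\rho\Vert\sigma)}+2^{k-m}.
\end{align}
The inputs are the SDP characterizations already in hand. From Corollary~\ref{prop:sc-reduction}, $2^{-\widetilde{E}_{d}^{m}(\rho\Vert\sigma)}=\sup_{\Lambda}\operatorname{Tr}[\Lambda\rho]$ over $0\leq\Lambda\leq I$ with $\operatorname{Tr}[\Lambda\sigma]\leq 2^{-m}$; from Proposition~\ref{prop:dist-dil-sdp}, $2^{-E_{c}^{k}(\rho\Vert\sigma)}=\inf_{\widetilde{\rho}}\frac{1}{2}\Vert\widetilde{\rho}-\rho\Vert_{1}$ over states $\widetilde{\rho}\leq 2^{k}\sigma$; and from \eqref{eq:alt-form-sc-exp-cost} and Proposition~\ref{prop:sdp-err-exp-div} the companion identities $2^{-\widetilde{E}_{c}^{k}(\rho\Vert\sigma)}=\sup_{\widetilde{\rho}}(1-\frac{1}{2}\Vert\widetilde{\rho}-\rho\Vert_{1})$ and $2^{-E_{d}^{m}(\rho\Vert\sigma)}=1-\sup_{\Lambda}\operatorname{Tr}[\Lambda\rho]$. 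The whole proof rests on two elementary bounds: first, for $0\leq\Lambda\leq I$ and the traceless Hermitian operator $\rho-\widetilde{\rho}$ one has $\operatorname{Tr}[\Lambda(\rho-\widetilde{\rho})]\leq\operatorname{Tr}[(\rho-\widetilde{\rho})_{+}]=\frac{1}{2}\Vert\rho-\widetilde{\rho}\Vert_{1}$ (the variational characterization of the trace norm); second, whenever $\widetilde{\rho}\leq 2^{k}\sigma$, $\Lambda\geq 0$, and $\operatorname{Tr}[\Lambda\sigma]\leq 2^{-m}$, one has $\operatorname{Tr}[\Lambda\widetilde{\rho}]\leq 2^{k}\operatorname{Tr}[\Lambda\sigma]\leq 2^{k-m}$.

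For the first inequality I would fix any feasible $\Lambda$ and any feasible $\widetilde{\rho}$ and split $\operatorname{Tr}[\Lambda\rho]=\operatorname{Tr}[\Lambda(\rho-\widetilde{\rho})]+\operatorname{Tr}[\Lambda\widetilde{\rho}]$. Applying the two bounds above to the respective terms gives $\operatorname{Tr}[\Lambda\rho]\leq\frac{1}{2}\Vert\rho-\widetilde{\rho}\Vert_{1}+2^{k-m}$. Since this holds for every feasible pair, taking the supremum over $\Lambda$ on the left and the infimum over $\widetilde{\rho}$ on the right yields $2^{-\widetilde{E}_{d}^{m}(\rho\Vert\sigma)}\leq 2^{-E_{c}^{k}(\rho\Vert\sigma)}+2^{k-m}$, and applying $-\log_2(\cdot)$ gives the first stated bound.

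For the second inequality I would use the same variational formula in the form $\frac{1}{2}\Vert\widetilde{\rho}-\rho\Vert_{1}\geq\operatorname{Tr}[\Lambda(\rho-\widetilde{\rho})]$, specialized to a feasible distillation operator $\Lambda$, which rearranges to $1-\frac{1}{2}\Vert\widetilde{\rho}-\rho\Vert_{1}\leq\operatorname{Tr}[(I-\Lambda)\rho]+\operatorname{Tr}[\Lambda\widetilde{\rho}]$. Bounding $\operatorname{Tr}[\Lambda\widetilde{\rho}]\leq 2^{k-m}$ exactly as before and writing $\operatorname{Tr}[(I-\Lambda)\rho]=1-\operatorname{Tr}[\Lambda\rho]$, I would take the supremum over $\widetilde{\rho}$ on the left and then the infimum over $\Lambda$ on the right, using $\inf_{\Lambda}(1-\operatorname{Tr}[\Lambda\rho])=2^{-E_{d}^{m}(\rho\Vert\sigma)}$, to obtain $2^{-\widetilde{E}_{c}^{k}(\rho\Vert\sigma)}\leq 2^{-E_{d}^{m}(\rho\Vert\sigma)}+2^{k-m}$, whence the second bound after $-\log_2(\cdot)$.

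There is no deep obstacle here: the argument is a single idea applied twice, once with $\Lambda$ against the trace distance and once the other way around. The points to be careful about are purely bookkeeping — that the pointwise inequality holds for every feasible pair (so that the supremum on one side and the infimum on the other may be taken independently), and that the factor of $\frac{1}{2}$ in the trace-norm bound genuinely requires $\rho-\widetilde{\rho}$ to be traceless, which is why $\widetilde{\rho}$ must be normalized as a state rather than merely a subnormalized operator.
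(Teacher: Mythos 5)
Your proof is correct, but it takes a genuinely different route from the paper's. The paper proves this proposition by pure translation: it invokes the known bound $D_{\min}^{\varepsilon_{1}}(\rho\Vert\sigma)\leq D_{\max}^{\varepsilon_{2}}(\rho\Vert\sigma)+\log_{2}\!\left(\frac{1}{1-\varepsilon_{1}-\varepsilon_{2}}\right)$ from \cite{Wang2019states}, sets $D_{\min}^{\varepsilon_{1}}(\rho\Vert\sigma)=m$ and $D_{\max}^{\varepsilon_{2}}(\rho\Vert\sigma)=k$, and uses the exact correspondences of Propositions~\ref{prop:smooth-dmin-to-dist-err-exp} and \ref{prop:smooth-dmax-to-err-exp-cost} (so that $E_{d}^{m}(\rho\Vert\sigma)=-\log_{2}\varepsilon_{1}$ and $E_{c}^{k}(\rho\Vert\sigma)=-\log_{2}\varepsilon_{2}$) to rewrite that bound as $1-2^{-E_{d}^{m}(\rho\Vert\sigma)}-2^{-E_{c}^{k}(\rho\Vert\sigma)}\leq2^{k-m}$, from which both claims follow via \eqref{eq:dist-dist-err-exp-sc-exp-eq} and \eqref{eq:relate-err-exp-sc-exp-cost}. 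You instead prove the exponentiated inequalities directly from the SDP characterizations, splitting $\operatorname{Tr}[\Lambda\rho]=\operatorname{Tr}[\Lambda(\rho-\widetilde{\rho})]+\operatorname{Tr}[\Lambda\widetilde{\rho}]$ and bounding the two pieces by $\frac{1}{2}\Vert\rho-\widetilde{\rho}\Vert_{1}$ (variational characterization of trace distance for the traceless difference of two states) and $2^{k}\operatorname{Tr}[\Lambda\sigma]\leq2^{k-m}$ (operator ordering against a positive $\Lambda$). What your route buys: it is self-contained, in effect re-deriving the \cite{Wang2019states} inequality rather than citing it, and it quietly repairs an implicit step in the paper's argument --- the paper assumes that the prescribed values $m,k\geq0$ are attained as $D_{\min}^{\varepsilon_{1}}$ and $D_{\max}^{\varepsilon_{2}}$ for some $\varepsilon_{1},\varepsilon_{2}\in(0,1)$, an inversion of the smoothing parameter that is glossed over, whereas your argument needs no such assumption (only nonemptiness of the feasible sets, e.g.\ $\Lambda=0$ and $\widetilde{\rho}=\sigma$). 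What the paper's route buys is brevity and a conceptual point: the proposition is exhibited as a direct repackaging of a known relation between smooth min- and max-relative entropies. Your cautionary remarks are also on target --- the normalization $\operatorname{Tr}[\widetilde{\rho}]=1$ is precisely what gives $\operatorname{Tr}[(\rho-\widetilde{\rho})_{+}]=\frac{1}{2}\Vert\rho-\widetilde{\rho}\Vert_{1}$, and the pointwise bound over all feasible pairs is what licenses taking the supremum and infimum independently on the two sides.
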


\begin{proof}
The following inequality is known \cite{Wang2019states}:%
\begin{equation}
D_{\min}^{\varepsilon_{1}}(\rho\Vert\sigma)\leq D_{\max}^{\varepsilon_{2}%
}(\rho\Vert\sigma)+\log_{2}\left(  \frac{1}{1-\varepsilon_{1}-\varepsilon_{2}%
}\right)  .
\end{equation}
By exploiting the identities in
Propositions~\ref{prop:smooth-dmin-to-dist-err-exp}\ and
\ref{prop:smooth-dmax-to-err-exp-cost}, and setting $D_{\min}^{\varepsilon
_{1}}(\rho\Vert\sigma)=m$ and $D_{\max}^{\varepsilon_{2}}(\rho\Vert\sigma)=k$,
while noticing that%
\begin{align}
E_{d}^{m}(\rho\Vert\sigma)  &  =-\log_{2}(\varepsilon_{1}),\\
E_{c}^{k}(\rho\Vert\sigma)  &  =-\log_{2}(\varepsilon_{2}),
\end{align}
we find that the inequality above translates to%
\begin{multline}
m\leq k+\log_{2}\left(  \frac{1}{1-2^{-E_{d}^{m}(\rho\Vert\sigma)}%
-2^{-E_{c}^{k}(\rho\Vert\sigma)}}\right) \\
\qquad\Longleftrightarrow\qquad1-2^{-E_{d}^{m}(\rho\Vert\sigma)}-2^{-E_{c}%
^{k}(\rho\Vert\sigma)}\leq2^{k-m}.
\end{multline}
This latter inequality implies the following inequality:%
\begin{align}
2^{-E_{c}^{k}(\rho\Vert\sigma)}+2^{k-m}  &  \geq1-2^{-E_{d}^{m}(\rho
\Vert\sigma)}\\
&  =2^{-\widetilde{E}_{d}^{m}(\rho\Vert\sigma)}.
\end{align}
It also implies the following inequality:%
\begin{align}
2^{-E_{d}^{m}(\rho\Vert\sigma)}+2^{k-m}  &  \geq1-2^{-E_{c}^{k}(\rho
\Vert\sigma)}\\
&  =2^{-\widetilde{E}_{c}^{k}(\rho\Vert\sigma)}.
\end{align}
By adding one to each side, we find that%
\begin{align}
1+2^{k-m}  &  \geq1-2^{-E_{d}^{m}(\rho\Vert\sigma)}+1-2^{-E_{c}^{k}(\rho
\Vert\sigma)}\\
&  =2^{-\widetilde{E}_{d}^{m}(\rho\Vert\sigma)}+2^{-\widetilde{E}_{c}^{k}%
(\rho\Vert\sigma)}.
\end{align}
This concludes the proof.
\end{proof}

\section{General state pair transformations---Error exponents and strong
converse exponents}

A more general question is to determine the error and strong converse
exponents for general state-pair transformations. Given the state pair
$\left(  \rho,\sigma\right)  $ and the state pair $\left(  \tau,\omega\right)
$, we can define the following error exponent and strong converse exponent:%
\begin{multline}
E^{n,m}(\left(  \rho,\sigma\right)  \rightarrow\left(  \tau,\omega\right)
)\coloneqq\\
-\log_{2}\inf_{\mathcal{P}^{(n)}\in\text{CPTP}}\left\{  \varepsilon
:\mathcal{P}^{(n)}(\rho^{\otimes n})\approx_{\varepsilon}\tau^{\otimes
m},\mathcal{P}^{(n)}(\sigma^{\otimes n})=\omega^{\otimes m}\right\}  ,
\end{multline}%
\begin{multline}
\widetilde{E}^{n,m}(\left(  \rho,\sigma\right)  \rightarrow\left(  \tau
,\omega\right)  )\coloneqq\\
-\log_{2}\left(  1-\inf_{\mathcal{P}^{(n)}\in\text{CPTP}}\left\{
\varepsilon:\mathcal{P}^{(n)}(\rho^{\otimes n})\approx_{\varepsilon}%
\tau^{\otimes m},\mathcal{P}^{(n)}(\sigma^{\otimes n})=\omega^{\otimes
m}\right\}  \right)
\end{multline}
For large $n$, the first one is relevant when $\frac{m}{n}<D(\rho\Vert
\sigma)/D(\tau\Vert\omega)$ and the second one is relevant when $\frac{m}%
{n}>D(\rho\Vert\sigma)/D(\tau\Vert\omega)$. The case of $m=n=1$ was already
considered in \cite[Eq.~(13)]{Wang2019states}, and it was shown therein how
$E^{1,1}(\left(  \rho,\sigma\right)  \rightarrow\left(  \tau,\omega\right)  )$
can be calculated by means of a semi-definite program.

The following bound is a consequence of \cite[Propositions 1 and~2]%
{Wang2019states}:%
\begin{multline}
\frac{1}{n}\widetilde{E}^{n,m}(\left(  \rho^{\otimes n},\sigma^{\otimes
n}\right)  \rightarrow\left(  \tau^{\otimes m},\omega^{\otimes m}\right)  )\\
\geq\max\left\{
\begin{array}
[c]{c}%
\sup_{\alpha\in(0,1)}\left(  \frac{1-\alpha}{2}\right)  \left(  \frac{m}%
{n}\cdot D_{\alpha}(\tau\Vert\omega)-D_{\beta(\alpha)}(\rho\Vert
\sigma)\right)  ,\\
\sup_{\alpha\in(1/2,1)}\left(  \frac{1-\alpha}{2\alpha}\right)  \left(
\frac{m}{n}\cdot\widetilde{D}_{\alpha}(\tau\Vert\omega)-\widetilde{D}%
_{\gamma(\alpha)}(\rho\Vert\sigma)\right)  ,
\end{array}
\right\}  ,
\end{multline}
where%
\begin{align}
&  \beta(\alpha)\coloneqq2-\alpha,\\
&  \gamma(\alpha)\coloneqq\frac{\alpha}{2\alpha-1}.
\end{align}

\section{Recent developments}

\label{sec:rec-devs}

These notes were written in June 2020, and all of the results presented in the
previous sections were developed at that time. Since then, there has been some
interest in the topic of exponents related to smooth max-relative entropy
\cite{LYH21,SD22}, which are clarified here to have operational meaning as
exponents for distinguishability dilution.

In the first paper \cite{LYH21}, the asymptotic error exponent for
distinguishability dilution, when the error is measured using the sine
distance $\sqrt{1-F(\rho,\sigma)} $ \cite{R02,R03,GLN04,R06}, where $F$ is the
fidelity, has been identified (specifically, see \cite[Theorem~6]{LYH21}).
Therein, the asymptotic error exponent for distinguishability dilution is
referred to as the ``exact exponent for the asymptotic decay of the small
modification of the quantum state in smoothing the max-relative entropy.'' It
remains open to identify this quantity when using the normalized trace
distance as the error.

In the second paper \cite{SD22}, a lower bound on the asymptotic strong
converse exponent for distinguishability dilution has been identified (see
\cite[Theorem~2]{SD22}). This bound improves upon the bound given in
Corollary~\ref{cor:sc-exp-dd-cost}. It is easy to state a one-shot version of
the bound given there using the terminology of this note:

\begin{proposition}
[{\cite[Theorem~2]{SD22}}]\label{prop:one-shot-bnd}For states $\rho$ and
$\sigma$ and $m\geq0$, the following inequality holds%
\begin{equation}
\sup_{\alpha\in\left(  0,1\right)  }\left(  \alpha-1\right)  \left(
m-D_{\alpha}(\rho\Vert\sigma)\right)  \leq\widetilde{E}_{c}^{m}(\rho
\Vert\sigma).
\end{equation}

\end{proposition}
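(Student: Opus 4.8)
The plan is to prove the one-shot bound
\[
\sup_{\alpha\in(0,1)}(\alpha-1)\left(m-D_{\alpha}(\rho\Vert\sigma)\right)\leq\widetilde{E}_{c}^{m}(\rho\Vert\sigma)
\]
by working directly from the variational form of the strong converse exponent established earlier in \eqref{eq:alt-form-sc-exp-cost}, namely
\[
\widetilde{E}_{c}^{m}(\rho\Vert\sigma)=\inf_{\widetilde{\rho}\in\mathcal{S},\,\widetilde{\rho}\leq2^{m}\sigma}\left(-\log_{2}\left[1-\tfrac{1}{2}\left\Vert\widetilde{\rho}-\rho\right\Vert_{1}\right]\right).
\]
The strategy mirrors the proof of Proposition~\ref{prop:one-shot-lower-bnd-dist-dil}: fix an arbitrary feasible state $\widetilde{\rho}$ satisfying $\widetilde{\rho}\leq2^{m}\sigma$, derive a lower bound on $-\log_{2}[1-\tfrac{1}{2}\|\widetilde{\rho}-\rho\|_{1}]$ that does not depend on $\widetilde{\rho}$, and then take the infimum. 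Since the bound in this proposition has a cleaner prefactor $(\alpha-1)$ rather than $(\alpha-1)/2$, I expect the key difference to be the use of a sharper Petz--Rényi inequality than the one invoked in Proposition~\ref{prop:one-shot-lower-bnd-dist-dil}.

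First I would record the relevant inequality relating the trace distance to the Petz--Rényi relative entropy. The natural tool is a bound of the form
\[
(1-\alpha)\left[D_{\alpha}(\rho\Vert\sigma)-D_{\beta(\alpha)}(\widetilde{\rho}\Vert\sigma)\right]\leq-\log_{2}\left[1-\tfrac{1}{2}\left\Vert\widetilde{\rho}-\rho\right\Vert_{1}\right]
\]
for a suitable conjugate exponent $\beta(\alpha)$, which is the analogue of the inequality from \cite[Lemma~3]{Wang2019states} used before but without the factor of $2$ weakening. Next I would bound the term $D_{\beta(\alpha)}(\widetilde{\rho}\Vert\sigma)$ using the feasibility constraint $\widetilde{\rho}\leq2^{m}\sigma$. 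Substituting $\sigma\geq2^{-m}\widetilde{\rho}$ and invoking operator anti-monotonicity of $x^{1-\beta}$ (for $\beta>1$), exactly as in the earlier proof, gives $D_{\beta(\alpha)}(\widetilde{\rho}\Vert\sigma)\leq m$; equivalently one simply uses $D_{\beta}(\widetilde{\rho}\Vert\sigma)\leq D_{\max}(\widetilde{\rho}\Vert\sigma)\leq m$.

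Combining these two steps eliminates the $D_{\beta}$ term and yields
\[
(1-\alpha)\left[D_{\alpha}(\rho\Vert\sigma)-m\right]\leq-\log_{2}\left[1-\tfrac{1}{2}\left\Vert\widetilde{\rho}-\rho\right\Vert_{1}\right],
\]
and rearranging gives $(\alpha-1)(m-D_{\alpha}(\rho\Vert\sigma))$ on the left. Since the right-hand side bound holds for every feasible $\widetilde{\rho}$, taking the infimum over $\widetilde{\rho}$ and then the supremum over $\alpha\in(0,1)$ delivers the claim. The main obstacle I anticipate is establishing the correct sharp inequality in the first step with prefactor $(1-\alpha)$ rather than $(1-\alpha)/2$; this is precisely where the improvement of \cite{SD22} over Proposition~\ref{prop:one-shot-lower-bnd-dist-dil} must enter, presumably via a tighter continuity or Fuchs--van de Graaf-type estimate applied more carefully to the Petz--Rényi quantity, so I would focus my effort on pinning down that lemma and verifying its hypotheses hold for the feasible pair $(\rho,\widetilde{\rho})$.
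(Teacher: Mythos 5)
Your scaffolding is right (start from \eqref{eq:alt-form-sc-exp-cost}, fix a feasible $\widetilde{\rho}\leq2^{m}\sigma$, derive a $\widetilde{\rho}$-independent lower bound, then optimize), and your second step, $D_{\beta}(\widetilde{\rho}\Vert\sigma)\leq D_{\max}(\widetilde{\rho}\Vert\sigma)\leq m$, is fine. But there is a genuine gap exactly where you place your bet: the ``sharp'' two-state inequality
\begin{equation*}
(1-\alpha)\left[  D_{\alpha}(\rho\Vert\sigma)-D_{\beta(\alpha)}(\widetilde
{\rho}\Vert\sigma)\right]  \leq-\log_{2}\left[  1-\tfrac{1}{2}\left\Vert
\widetilde{\rho}-\rho\right\Vert _{1}\right]
\end{equation*}
is never established, and for the natural conjugate exponent $\beta
(\alpha)=2-\alpha$ (i.e., \cite[Lemma~3]{Wang2019states} with the factor of
$2$ simply dropped) it is \emph{false}. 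Classical counterexample: take
$\sigma=(1/2,1/2)$, $\rho=(1,0)$, $\widetilde{\rho}=(1-\delta,\delta)$, so
that $\tfrac{1}{2}\Vert\widetilde{\rho}-\rho\Vert_{1}=\delta$, $D_{\alpha
}(\rho\Vert\sigma)=1$, and $D_{2-\alpha}(\widetilde{\rho}\Vert\sigma
)=1+\tfrac{1}{1-\alpha}\log_{2}[(1-\delta)^{2-\alpha}+\delta^{2-\alpha}]$;
the claimed inequality then reduces to $(1-\delta)^{2-\alpha}+\delta
^{2-\alpha}\geq1-\delta$, which fails, e.g., at $\alpha=0.1$, $\delta=0.1$
(left side $\approx0.83<0.9$). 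So the improvement of \cite{SD22} cannot enter
by sharpening that continuity-type lemma, and your guess that it comes from a
tighter Fuchs--van de Graaf estimate points in the wrong direction: fidelity-based
arguments are precisely where the factor-of-$2$ losses originate.

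The paper's proof avoids any two-state, common-reference R\'{e}nyi inequality
altogether. Write $1-\tfrac{1}{2}\Vert\widetilde{\rho}-\rho\Vert_{1}
=\inf_{0\leq T\leq I}\operatorname{Tr}[(I-T)\rho]+\operatorname{Tr}
[T\widetilde{\rho}]$ as in \eqref{eq:aud-1}, use the feasibility constraint to
bound $\operatorname{Tr}[T\widetilde{\rho}]\leq2^{m}\operatorname{Tr}
[T\sigma]$, and then apply Lemma~\ref{lemma:spectral-ineq} to the pair
$(\rho,2^{m}\sigma)$---the lemma holds for arbitrary positive semi-definite
operators, not just states---to get $\inf_{T}\{\cdots\}\leq\operatorname{Tr}
[\rho^{\alpha}(2^{m}\sigma)^{1-\alpha}]=2^{(1-\alpha)m}2^{(\alpha-1)D_{\alpha
}(\rho\Vert\sigma)}$, which is exactly the claim after taking $-\log_{2}$.
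Equivalently, in the spirit of your outline, the correct repair is: apply
Lemma~\ref{lemma:spectral-ineq} to $(\rho,\widetilde{\rho})$ to obtain
$-\log_{2}[1-\tfrac{1}{2}\Vert\widetilde{\rho}-\rho\Vert_{1}]\geq
(1-\alpha)D_{\alpha}(\rho\Vert\widetilde{\rho})$, and then use
anti-monotonicity of $D_{\alpha}$ in its \emph{second} argument (operator
monotonicity of $x^{1-\alpha}$) together with $\widetilde{\rho}\leq
2^{m}\sigma$ to conclude $D_{\alpha}(\rho\Vert\widetilde{\rho})\geq D_{\alpha
}(\rho\Vert\sigma)-m$. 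In other words, the ``conjugate exponent'' your plan
needs is effectively $\beta=\infty$ (the $D_{\max}$ constraint itself), at
which point the statement is no longer a continuity estimate but a direct
consequence of the operator inequality; the full prefactor $(\alpha-1)$ comes
from comparing $\rho$ against the single operator $2^{m}\sigma$, not from a
sharpened version of the lemma used in
Proposition~\ref{prop:one-shot-lower-bnd-dist-dil}.
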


\begin{proof}
Consider, by the same approach used for \cite[Theorem~2]{SD22}, that%
\begin{align}
\widetilde{E}_{c}^{m}(\rho\Vert\sigma)  &  =\inf_{\widetilde{\rho}%
\in\mathcal{S},\widetilde{\rho}\leq2^{m}\sigma}\left(  -\log_{2}\left[
1-\frac{1}{2}\left\Vert \widetilde{\rho}-\rho\right\Vert _{1}\right]  \right)
\\
&  =\inf_{\widetilde{\rho}\in\mathcal{S},\widetilde{\rho}\leq2^{m}\sigma
}\left(  -\log_{2}\left[  \inf_{T:0\leq T\leq I}\operatorname{Tr}[\left(
I-T\right)  \rho]+\operatorname{Tr}[T\widetilde{\rho}]\right]  \right) \\
&  \geq-\log_{2}\left[  \inf_{T:0\leq T\leq I}\operatorname{Tr}[\left(
I-T\right)  \rho]+\operatorname{Tr}[T2^{m}\sigma]\right] \\
&  \geq\sup_{\alpha\in\left(  0,1\right)  }\left(  -\log_{2}\operatorname{Tr}%
[\rho^{\alpha}\left(  2^{m}\sigma\right)  ^{1-\alpha}]\right) \\
&  =\sup_{\alpha\in\left(  0,1\right)  }\left(  \alpha-1\right)  \left(
m-D_{\alpha}(\rho\Vert\sigma)\right)  .
\end{align}
This follows by applying \eqref{eq:alt-form-sc-exp-cost}, the constraint
$\widetilde{\rho}\leq2^{m}\sigma$, and \eqref{eq:aud-1}--\eqref{eq:aud-2}.
\end{proof}

Then the asymptotic lower bound from \cite[Theorem~2]{SD22} is a direct
consequence of the one-shot bound from Proposition~\ref{prop:one-shot-bnd}:%
\begin{equation}
\sup_{\alpha\in\left(  0,1\right)  }\left(  \alpha-1\right)  \left(
R-D_{\alpha}(\rho\Vert\sigma)\right)  \leq\lim_{n\rightarrow\infty}\frac{1}%
{n}\widetilde{E}_{c}^{nR}(\rho^{\otimes n}\Vert\sigma^{\otimes n}).
\end{equation}

\bigskip

\textbf{Data availability statement}---Data sharing not applicable to this article as no datasets were generated or analysed during the current study.

\bigskip

\textbf{Acknowledgements}---I acknowledge several discussions with Nilanjana
Datta and Felix Leditzky, and especially their help in establishing
Propositions~\ref{prop:smooth-dmin-to-dist-err-exp} and
\ref{prop:smooth-dmax-to-err-exp-cost}. I also acknowledge Robert Salzmann for discussions.

\bibliographystyle{alpha}
\bibliography{Ref}

\appendix

\section{Background on semi-definite programs}

\label{sec:background-SDPs}A semi-definite program is characterized by a
triple $(\Phi,A,B)$ where $\Phi$ is a Hermiticity-preserving map and $A$ and
$B$ are Hermitian operators. The primal program is given by%
\begin{equation}
\alpha\coloneqq \sup_{X\geq0}\left\{  \operatorname{Tr}[AX]:\Phi(X)\leq
B\right\}  ,
\end{equation}
and the dual program is given by%
\begin{equation}
\beta\coloneqq \sup_{Y\geq0}\left\{  \operatorname{Tr}[BY]:\Phi^{\dag}(Y)\geq
A\right\}  .
\end{equation}

Weak duality is the statement that the following inequality always holds%
\begin{equation}
\alpha\leq\beta.
\end{equation}

Slater's condition for strong duality is as follows:

\begin{enumerate}
\item If there exists $X\geq0$ such that $\Phi(X)\leq B$ and there exists
$Y>0$ such that $\Phi^{\dag}(Y)>A$, then $\alpha=\beta$ and there exists a
primal feasible operator $X$ for which $\operatorname{Tr}[AX]=\alpha$.

\item If there exists $Y\geq0$ such that $\Phi^{\dag}(Y)\geq A$ and there
exists $X>0$ such that $\Phi(X)<B$, then $\alpha=\beta$ and there exists a
dual feasible operator $Y$ for which $\operatorname{Tr}[BY]=\beta$.
\end{enumerate}

Complementary slackness for SDPs is useful for understanding optimal
conditions. Suppose that strong duality holds. Then the following
complementary slackness conditions hold for feasible $X$ and $Y$ if and only
if they are optimal:%
\begin{align}
BY  &  =\Phi(X)Y,\label{eq:comp-slack-1}\\
\Phi^{\dag}(Y)X  &  =AX. \label{eq:comp-slack-2}%
\end{align}

\end{document}